\documentclass{sig-alternate}
\usepackage{graphicx}
\usepackage{amssymb}
\usepackage{mathrsfs}
\usepackage{epsfig}
\usepackage{color}
\usepackage{amsmath}
\usepackage{amssymb}
\usepackage{graphicx}
\usepackage{times,url}
\usepackage{subfigure}
\usepackage{graphics,color}
\usepackage{multicol,algorithm,algorithmic}
\newtheorem{theorem}{Theorem}

\newtheorem{corollary}{Corollary}
\newtheorem{definition}{Definition}

\newcommand{\ignore}[1]{}

\newcommand{\C}{{\cal C}}
\newcommand{\EXP}{{\sf exp}}
\newcommand{\secure}{{\em secure}}
\newcommand{\infected}{{\em infected}}

\begin{document}

\conferenceinfo{HotSoS}{'14 Raleigh, NC USA}

\title{Characterizing the Power of Moving Target Defense via Cyber Epidemic Dynamics}

\numberofauthors{1}

\author{
\alignauthor
Yujuan Han${^\dag}{^\star}$ ~~~ Wenlian Lu${^\dag}{^\ddag}$ ~~~ Shouhuai Xu$^\star$ \\
~ \\
\affaddr{$^\dag$ School of Mathematical Sciences, Fudan University} \\
\affaddr{$^\star$ Department of Computer Science, University of Texas at San Antonio} \\
\affaddr{$^\ddag$ Department of Computer Science, University of Warwick}
}

\maketitle

\begin{abstract}
Moving Target Defense (MTD) can enhance
the resilience of cyber systems against attacks. Although there have been many MTD techniques,
there is no systematic understanding and {\em quantitative} characterization of the power of MTD.
In this paper, we propose to use a cyber epidemic dynamics approach to characterize the power of MTD.
We define and investigate two complementary measures that are applicable when the defender aims to deploy MTD to achieve a certain security goal.
One measure emphasizes the {\em maximum} portion of time during which the system can afford to stay in
an undesired configuration (or posture),
without considering the cost of deploying MTD.
The other measure emphasizes the {\em minimum} cost of deploying MTD, while accommodating that
the system has to stay in an undesired configuration (or posture) for a given portion of time.
Our analytic studies lead to algorithms for optimally deploying MTD.
\end{abstract}

\category{D.4.6}{Security and Protection}{}

\terms{Security, Theory}

\keywords{Moving target defense, cyber epidemic dynamics, epidemic threshold, security models, cybersecurity dynamics}

\section{Introduction}

Moving Target Defense (MTD) is believed to be a ``game changer" for cyber defense.
Although there have been many studies on specific MTD techniques,
the power of MTD is often demonstrated via simulation.
Rigorously characterizing the power of MTD is an important problem
and is closely related to the well known hard problem of security metrics.
In this paper, we initiate the study of a novel approach for characterizing the power of MTD.

\subsection{Our Contributions}

We propose to use the cyber epidemic dynamics approach to characterize the power of {\em classes} of MTD techniques.
We define and investigate two novel and complementary security measures that are applicable when using MTD to achieve a certain defense goal.
The first measure is centered on the {\em maximum} portion of time (in the equilibrium) during which
the system can afford to stay in an undesired/insecure configuration (or posture), without considering the cost of deploying MTD.
The second measure is centered on the {\em minimum} cost when the system has to stay in an undesired/insecure configuration (or posture) for a predetermined portion of time.
Our analytic studies lead to algorithms for deploying MTD in such optimal fashions.
To our knowledge, this is the first systematic study on characterizing the power of classes of MTD techniques.

\subsection{The Science}

Rigorously characterizing the power of MTD (or any defense) would be a core problem in
the science of security. Indeed, the present study initiates a paradigm for measuring the power of MTD (or other kinds of defense techniques whose deployment can
make a global difference). The paradigm is centered on measuring the degree of undersired/insecure configurations that can be tolerated by
deploying advanced defense such as MTD. The specific criterion used in the present paper, namely
that the attacks are eventually wiped out in a certain sense,
can be substituted by other criteria. One possible candidate is the containment of malware infections to a certain tolerable level \cite{XuTAAS2010}
(e.g., by appropriately choosing threshold cryptosystems).

\subsection{Related Work}
\label{sec:related-work}

The present paper does not introduce any new MTD technique.
Rather, it studies how to systematically characterize
the power of MTD and optimally launch MTD.
Existing studies for a similar purpose are often based on simulation \cite{Antonatos:2005:DAH:1103626.1103633,IPHopping,BBN-DYNAT,Jafarian:2012:ORH:2342441.2342467}.
There is effort at analytically evaluating the power of some {\em specific} MTD techniques
from a {\em localized} view \cite{DBLP:series/ais/Manadhata13,ZhuGameSec13};
in contrast, we characterize the power of {\em classes} of MTD techniques from a {\em global} view.

Cyber epidemic dynamics was rooted in biological epidemic dynamics \cite{McKendrick1926,Kermack1927}.
The first cyber epidemic models \cite{KephartOkland91,KephartOkland93}
were limited by their {\em homogeneity} assumption that each computer/node has the same effect on the others.
Recently, models that are more appropriate for studying cyber security problems have been proposed
\cite{Pastor2002a,WangSRDS03,WangTISSEC08,TowsleyInfocom05,VanMieghemIEEEACMTON09,XuTAAS12,XuTDSC12,XuTAAS2010}.
As we will elaborate later, the basic idea underlying these models is to use a graph-theoretic abstraction to represent
the {\em attack-defense} structure, and use {\em parameters} to represent attack and defense capabilities.
Cyber epidemic dynamics is a special kind of {\em cybersecurity dynamics} \cite{XuHotSOS14-cybersecurity-dynamics-poster},

We will use the cyber epidemic dynamics model in \cite{XuTAAS2010} as the starting point of our study.
This model \cite{XuTAAS2010} describes {\em reactive adaptive defense}
(i.e., the defender aims to adjust its defense to control/contain the global security state).
We extend this model to accommodate MTD, a kind of {\em proactive defense},
and the resulting model is analyzed using different skills \cite{Liberzon,Mariton}.
We mention that the effect of dynamic structures in cyber epidemic models is studied in \cite{FaloutsosECML-PKDD10},
where the structure dynamics however follows a {\em deterministic and periodic} process, rather than {\em adaptively} scheduled by using (for example) MTD.
We also mention that the effect of dynamic semi-heterogeneous structures (i.e., clustered networks), rather than arbitrary heterogeneous structures, is studied in \cite{Rami2013}.
These studies \cite{FaloutsosECML-PKDD10,Rami2013} consider static parameters only and do not have any of the measures we propose to use.

The rest of the paper is organized as follows.
In Section \ref{sec:models-accommodating-MTD}, we present a classification of MTD techniques and describe a cyber epidemic dynamics model that can accommodate MTD.
In Section \ref{sdvsss}, we characterize the power of MTD that induces dynamic parameters.
In Section \ref{dsvsss}, we characterize the power of MTD that induces dynamic attack-defense structures.
We discuss the limitations of the present study in Section \ref{sec:extension-and-discussion}.
We conclude the paper in Section \ref{sec:conclusion}.

\section{Cyber Epidemic Dynamics Model Accommodating Moving Target Defense}
\label{sec:models-accommodating-MTD}

\subsection{Three Classes of Moving Target Defense Techniques}

As mentioned in Section \ref{sec:related-work} and elaborated later, cyber epidemic dynamics models use a graph-theoretic abstraction to represent
the {\em attack-defense} structure, and use {\em parameters} to represent attack and defense ecapabilities.
This suggests us to classify MTD techniques based on what they will induce changes to the
attack-defense structure and/or parameters.

\subsubsection*{{\bf Networks-based MTD Techniques (Class I)}}

Example techniques that fall into this class are {\em IP address (and TCP port) randomization} and {\em dynamic access control}.
The basic idea underlying IP address and TCP port randomization is to frequently shuffle the IP addresses of computers by using various methods.
One method is to use virtual machine techniques, such as
migrating ensembles of virtual machines \cite{Keller:2012:LME:2390231.2390250}
and others \cite{IBMVirtualWire2013,DBLP:series/ais/YackoskiBYL13}.
Another method is to use networking techniques,
such as Network Address Space Randomization (NASR) whereby
IP addresses can be dynamically assigned (in lieu of DHCP) to render the attacker's hitlist useless \cite{Antonatos:2005:DAH:1103626.1103633},
IP hopping \cite{IPHopping} and others \cite{BBN-DYNAT}.
A recent variant also considers constraints and how to minimize the operation cost \cite{Jafarian:2012:ORH:2342441.2342467}.

The basic idea underlying dynamic access control is to dynamically regulate which computers or which network address space can
directly have access to the services in which other network address space.
For example, certain servers on a campus network only accommodate service requests from certain classrooms.
By dynamically randomizing IP addresses of authorized computers (e.g., using aforementioned techniques),
some compromised computers cannot launch direct attacks against some target computers.

\subsubsection*{{\bf Hosts-based MTD Techniques (Class II)}}

Four kinds of techniques fall into this class: instruction-level, code-level, memory-level, and application-level.
One {\em instruction-level} technique is called Instruction Set Randomization (ISR),
which aims to randomize the instructions of each process so that the attacker cannot inject executable malicious code
\cite{Kc:2003:CCA:948109.948146,Barrantes:2003:RIS:948109.948147}.
ISR uses a program-specific key to encrypt the instructions of a program
and the processor uses the same key to decrypt and run the instructions,
where encryption is often based on binary transformation tools, and decryption is often based on
dynamic binary instrumentation tools \cite{Barrantes:2003:RIS:948109.948147,Barrantes:2005:RIS:1053283.1053286},
emulators \cite{Kc:2003:CCA:948109.948146,Boyd:2010:GAI:1850488.1850601}, or
architectural support \cite{Papadogiannakis:2013:AAS:2508859.2516670,Sovarel:2005:WFE:1251398.1251408,Weiss:2006:KKA:1191820.1191890}.

One {\em code-level} technique is {\em code randomization} \cite{Cohen:1993:OSP:179007.179012,Forrest:1997:BDC:822075.822408}.
Code randomization offers fine-grained protection against code reuse attacks by
substituting/reordering instructions, inserting NOPs, and re-allocating registers.
Code randomization operations can be conducted at the compiler  \cite{Giuffrida:2012:EOS:2362793.2362833,series/ais/JacksonSHMWGBWF11} or
virtual machine level \cite{HiserOakland12}, or via static binary rewriting \cite{DBLP:conf/sp/PappasPK12,Wartell:2012:BSS:2382196.2382216}
and runtime binary rewriting \cite{Bruening:2003:IAD:776261.776290,Kiriansky:2002:SEV:647253.720293,Luk:2005:PBC:1065010.1065034,Nethercote:2007:VFH:1250734.1250746}.
Dynamically generated code can be randomized as well \cite{Homescu:2013:LTC:2541806.2516675}.

One {\em memory-level} technique is called Address Space Layout Randomization (ASLR), which
defeats code-injection attacks by randomizing the memory layout of
a program (e.g., stack) either at the compile time or at the runtime \cite{ASLR}.
ASLR can protect an executable (including the associated static data, stack, heap and dynamic libraries) \cite{Bhatkar:2003:AOE:1251353.1251361}
and the operating system kernel \cite{Giuffrida:2012:EOS:2362793.2362833},
but cannot defeat code reuse attacks.

One {\em application-level} technique is called $N$-version programming \cite{AvizienisNVersionProgramming1985},
by which the defender can dynamically use different implementations of the same program function.
Another technique is called {\em proactive cryptography}.
Cryptographic properties proven in abstract models are undermined by
attacks (e.g., malware) that can compromise cryptographic keys.
Threshold cryptography can avoid this single-point-of-failure
because it ``split" a key into $m$ pieces such that
compromising fewer than $t$ pieces will not cause exposure of the key, while the cryptographic function can
be computed when any $t$ of the $m$ pieces participate \cite{S79,DF89}.
Proactive cryptography can render the compromised pieces of a key useless once the pieces are re-shuffled \cite{HJJKY97}.

\subsubsection*{{\bf Instruments-based MTD Techniques (Class III)}}

The defender can utilize honeypot-like techniques to capture new attacks.
However, the attacker can ``tomograph" honeypots and bypass the IP addresses monitored by them.
Therefore, the defender can dynamically change the IP addresses monitored by honeypots \cite{CaiHoneypot2009}.

\subsection{Cyber Epidemic Dynamics Model Accommodating MTD}
\label{sec:model}

\subsubsection*{{\bf Cyber Epidemic Dynamics Models}}
The basic idea underlying cyber epidemic dynamics models (see, for example, \cite{WangSRDS03,WangTISSEC08,TowsleyInfocom05,VanMieghemIEEEACMTON09,XuTDSC12,XuTAAS12,XuTAAS2010})
can be explained as follows.
Cyber attacks are often launched from compromised computers against vulnerable computers.
This means that there exists an $(attacker,victim)$ relation, which captures that an attacker (e.g., compromised computer)
can {\em directly} attack a victim (e.g., vulnerable) computer.
In the extreme case where any computer can attack any other computer, this relation induces a complete graph structure.
In general, any graph structure can be relevant.
The resulting graph structures are called {\em attack-defense structures},
where compromised computers/nodes may be detected and cured, but may later get attacked again.
Such models can naturally abstract attack and defense capabilities as {\em parameters} that are associated to the nodes and edges of
attack-defense structures.
A core concept in cyber epidemic dynamics models is the so-called {\em epidemic threshold}, namely
a {\em sufficient condition} under which the epidemic dynamics converges to the clean state.

\subsubsection*{{\bf Accommodating MTD}}
We adapt the cyber epidemic dynamics model introduced in \cite{XuTAAS2010}, which
considers {\em reactive adaptive defense},
to accommodate MTD (i.e., {\em proactive defense}).
Specifically, the afore-discussed {\bf Class I} MTD techniques can be accommodated with {\em dynamic attack-defense structures},
because they can cause that an infected computer may be able to attack a vulnerable computer at time $t_1$ but not at time $t_2>t_1$
(e.g., the vulnerable computer's IP address has been changed).
{\bf Class II} MTD techniques can be accommodated with
dynamic parameters because they can affect capabilities of attacker and defender over time.
{\bf Class III} MTD techniques can be accommodated with dynamic attack-defense structures (because an IP address assigned to honeypot
at time $t_1$ may be re-assigned to a production computer at time $t_2>t_1$)
{\em and} dynamic parameters (because the defender could learn zero-day attacks
from honeypot-captured data, and identify and disseminate countermeasures to prevent/detect such attacks).
As such, our characterization study can accommodate the three classes of MTD techniques.

Specifically, we consider a cyber epidemic dynamics model with dynamic attack-defense structure $G(t)=(V,E(t))$,
where $V$ is the set of nodes (e.g., computers) and $E(t)$ is the set of edges at time $t$
such that $(w,v)\in E(t)$ means that node $w$ can attack node $v$ at time $t$.
Suppose $|V|=n$. We may think $V=\{1,\ldots,n\}$ as well.
Let $A(t)=[A_{vu}(t)]$ denote the  adjacency matrix of $G(t)$, where $A_{vu}(t)=1$ if
$(u,v)\in E(t)$ and $A_{vu}(t)=0$ otherwise.
Naturally, we have $A_{uu}(t)=0$ for all $u\in V$ (i.e., a computer does not attack itself).
Suppose any node $v\in V$ has two possible states: {\em secure} or {\em infected}.
A node $v\in V$ is {\em secure} if it is vulnerable but not successfully attacked yet, and \infected\ if it is successfully attacked.
Let $i_v(t)$ and $s_v(t)$ respectively be the probabilities that $v\in V$ is {\em infected} and {\em secure} at time $t$, where $i_v(t)+s_v(t)=1$.

Let $\gamma(t)$ be the probability that at time $t$, an \infected\ node $u\in V$ successfully attacks a \secure\ node $v\in V$ over $(u,v)\in E(t)$.
Let $\beta(t)$ be the probability that an \infected\ node $v\in V$ becomes \secure\ at time $t$.
Suppose the attacks are independently launched. The probability that a \secure\ node $v\in V$ becomes \infected\ at time $t$ is \cite{XuTAAS2010}:
\begin{align*}
\xi_v(t)=1-\prod_{G(t)=(V,E(t)):A_{vu}(t)=1}\left(1-i_u(t)\gamma(t) \right).
\end{align*}
The master dynamics equation is \cite{XuTAAS2010}:
\begin{eqnarray*}
\frac{d i_v(t)}{dt}=\xi_v(t)(1-i_v(t))-\beta(t) i_v(t) = \nonumber\\
\left(1-\prod\limits_{u\in V}(1- A_{vu}(t)i_u(t) \gamma(t))\right)(1-i_v(t))-i_v(t)\beta(t).
\end{eqnarray*}
This is the starting point of our study.
Table \ref{table:notations} lists the main notations used in the paper.

\begin{table}[h]
\centering
\caption{Main notations used throughout the paper\label{table:notations}}
\begin{tabular}{|r|p{.355\textwidth}|}
\hline
$X(t),X$ & $X(t)$ is a function of time $t$, while $X$ is not \\
$G(t)$ & $G(t)=(V,E(t))$ is attack-defense structure at time $t$: a graph of node set $V$ and edge set $E(t)$, where $|V|=n$ \\
$A(t)$ & adjacency matrix of $G(t)=(V,E(t))$ \\
$s_v(t),i_v(t)$ & the probability node $v$ is \secure\ or \infected\ at time $t$\\
$i_v^*$  & the probability node $v$ is \infected\ as $t\to \infty$ (if existing)\\
$I^*$    & $I^*\stackrel{def}{=}(i_1^*,\ldots,i_{n}^*)$ where $n=|V|$ \\
$\beta(t)$ & the {\em cure probability} that an \infected\ node becomes \secure\ at time $t$ (reflecting defense power)\\
$\gamma(t)$ & the {\em infection probability} that \infected\ node $u$ successfully attacks \secure\ node $v$ over edge $(u,v)\in E(t)$\\
$\C(t)$ & $\C(t)=(G(t),\beta(t),\gamma(t))$ is system configuration or posture at time $t$ \\
$\C_1$ & $\C_1=(G_1,\beta_1,\gamma_1)$ is the undesired/insecure configuration that violates the convergence condition\\
$\C_j$ & $\C_j =(G_j,\beta_j,\gamma_j)$ for $j \geq 2$ are MTD-induced desired configurations that satisfy the convergence condition \\
\hline
$\lambda_{1} (A)$ & the largest eigenvalue (in modulus) of matrix $A$\\
$I_n$ & the $n$-dimensional identity matrix \\
$\|\cdot\|$ & the 2-norm of vector or matrix\\
\hline
$s\leftarrow_R S$ & select $s$ as a random element of set $S$ \\
$T\leftarrow \EXP(a)$ & assign $T$ a value according to the exponential distribution with parameter $a$ \\
\hline
\end{tabular}
\end{table}

\subsection{Measuring the Power of MTD}
\label{explanation}

Let $I^*=(0,\ldots,0)$ denote the clean state or equilibrium $i_v^*\stackrel{def}{=}\lim_{t\to\infty}i_v(t)=0$ for all $v\in V$,
namely that there are no \infected\ computers in the equilibrium (i.e., the spreading dies out).
Cyber epidemic threshold is a sufficient condition under which the dynamics converges to $I^*=(0,\ldots,0)$.
In the special case that $G=(V,E)$ and $(\beta,\gamma)$ are independent of $t$,
it is known \cite{WangTISSEC08,TowsleyInfocom05,VanMieghemIEEEACMTON09,XuTAAS12} that the dynamics converges to $I^*=(0,\ldots,0)$ if
\begin{equation}
\label{eqn:static-epidemic-threshold}
\mu\stackrel{def}{=}\beta-\gamma\lambda_1(A)>0,
\end{equation}
where $\lambda_1(A)$ is the largest (in modulus) eigenvalue of the adjacent matrix $A$ of $G$.
If $\mu<0$, the dynamics does not converge to $I^*=(0,\ldots,0)$ at least for some initial values.

Suppose the defender is confronted with configuration or posture $\C_1=(G_1=(V,E_1), \beta_1,\gamma_1)$,
under which condition (\ref{eqn:static-epidemic-threshold}) does not hold.
Suppose the defender can launch combinations of MTD techniques to induce
configurations $\C_j=(G_j=(V,E_j), \beta_j,\gamma_j)$ for $j\geq 2$,
each of which satisfies condition (\ref{eqn:static-epidemic-threshold}).
If the defender can always assure $(G(t),\beta(t),\gamma(t))=\C_j=(G_j,\beta_j,\gamma_j)$ for {\bf any} $t>0$ and some $j\geq 2$,
the problem is solved because the defender can make the dynamics converge to $I^*=(0,\ldots,0)$ by launching MTD to induce $\C_j$.
However, it would be more realistic that the defender can maintain such configurations as $\C_j$ ($j\geq 2$) for a small period of time,
because the attacker can introduce (for example) zero-day attacks to force the system to depart from configuration $\C_j$ and enter configuration $\C_1$.
Moreover, the system may have to stay in configuration $\C_1$ at least for some period of time
because $G_1=(V,E_1)$ is necessary for facilitating some applications.
Figure \ref{fig:switching-illustration} illustrates the idea of using MTD to make the overall dynamics converge to $I^*=(0,\ldots,0)$,
while allowing the system to stay for some portion of time in the undersired configuration $\C_1$, which violates condition \eqref{eqn:static-epidemic-threshold}.

\begin{figure}[!htbp]
\centering
\includegraphics[width=0.5\textwidth]{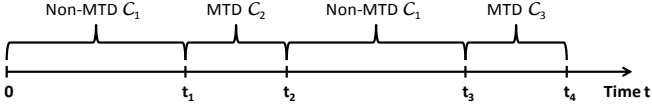}
\caption{Illustration of MTD-induced switching of configurations:
The system is in configuration $\C_1$ during time interval $[0,t_1)$,
in $\C_2$ during time interval $[t_1,t_2)$ because the defender launches MTD,
etc. Although $\C_1$ violates condition \eqref{eqn:static-epidemic-threshold},
the {\em overall dynamics} can converge to $I^*=(0,\ldots,0)$ because of MTD.
Note that $\C_2$ and $\C_3$ may reside in between two $\C_1$'s
(i.e., launching two combinations of MTD to induce $\C_2$ and $\C_3$ before returning to $\C_1$).}
\label{fig:switching-illustration}
\end{figure}

The preceding discussion leads us to define two measures of power of MTD.
The first definition captures the {\em maximum} time the system can afford to stay in the undersired configuration $\C_1$
while the overall dynamics converges to $I^*=(0,\ldots,0)$ because of MTD, without considering cost.

\begin{definition}
\label{definition-measure-without-cost}
\emph{{\bf (power of MTD without considering cost)}}
Consider undesired configuration $\C_1=(G_1,\beta_1,\gamma_1)$ that violates condition (\ref{eqn:static-epidemic-threshold}).
Suppose the defender can launch MTD to induce configurations $\C_j=(G_j,\beta_j,\gamma_j)$
for $j\in [2,\ldots,J]$, where each $\C_j$ satisfies condition (\ref{eqn:static-epidemic-threshold}).
Denote by $\mu_k=\beta_k-\gamma_k\lambda_1 (A_k)$ for $k=1,\ldots,J$, where $A_k$ is the adjacency matrix of $G_k$.
We say MTD is $(\mu_1,\mu_2,\ldots,\mu_J, \pi^*_1)$-powerful if it can make the overall dynamics
converge to $I^*=(0,\ldots,0)$, while allowing the system to stay in configuration $\C_1$ for the maximum $\pi^*_1$-portion of time in the equilibrium.
\end{definition}

The second definition captures the {\em minimum} cost with respect to a given portion of time, $\pi_1$, during which the system must stay in
configuration $\C_1$.

\begin{definition}
\label{definition-measure-with-cost}
\emph{{\bf (power of MTD while considering cost)}}
Consider undesired configuration $\C_1=(G_1,\beta_1,\gamma_1)$ that violates condition (\ref{eqn:static-epidemic-threshold}), and the potion of time $\pi_1$
that the system must stay in configuration $\C_1$.
Suppose the defender can launch MTD to induce configurations $\C_j=(G_j,\beta_j,\gamma_j)$
for $j=2,\ldots,J$, where each $\C_j$ satisfies condition (\ref{eqn:static-epidemic-threshold}).
Denote by $\mu_j=\beta_j-\gamma_j\lambda_1 (A_j)$ for $j=2,\ldots,J$, where $A_j$ is the adjacency matrix of $G_j$.
Consider cost function $h(\cdot):{\mathbb R}^+\to {\mathbb R}^+$
such that $h(\mu_j)$ is the cost of launching MTD to induce configuration $\C_j$ for $j=2,\ldots,J$, where $h'(\mu)\geq 0$ for $\mu>0$.
For give cost function $h(\cdot)$, we say MTD is $(\mu_1,\mu_2,\cdots,\mu_J,\pi_1,\Upsilon)$-powerful if the overall dynamics converges to $I^*=(0,\ldots,0)$
at the minimum cost $\Upsilon(\pi_2^*,\cdots,\pi_{J}^*)$, where
$\pi_j^*$ ($2\leq j \leq J$) is the portion of time the system stays in configuration $\C_j$ in the equilibrium.
\end{definition}

{\bf Remark}.
Definitions \ref{definition-measure-without-cost}-\ref{definition-measure-with-cost}
characterize the power of MTD from two complementary perspectives.
Definition \ref{definition-measure-without-cost} does not explicitly mention $\pi_2^*,\ldots,\pi_J^*$
because the problem of computing $\pi_2^*,\ldots,\pi_N^*$ is orthogonal to the existence of $\pi_1^*$.
Nevertheless, all of our results allow to explicitly compute $\pi_2^*,\cdots,\pi_N^*$.
Definition \ref{definition-measure-with-cost} explicitly mentions $\pi_2^*,\ldots,\pi_J^*$
because they are essential to the definition of minimum cost,
where $\pi_{1}$ is not a parameter of the cost $\Upsilon$ because the system must stay in $\C_1$ for a predetermined portion of time $\pi_1$.

\section{Power of MTD Inducing Dynamic Parameters}
\label{sdvsss}

In this section we characterize the power of MTD
that induces dynamic parameters but keeps the attack-defense structure intact (i.e., $G$ is independent of time $t$ throughout this section).
Let $A$ be the adjacency matrix of $G$.
We first recall the following theorem from \cite{XuTAAS2010} and present a corollary of it.

\begin{theorem}\label{sd_theorem}
\emph{(\cite{XuTAAS2010})}
Consider configurations $(G,\beta(t),\gamma(t))$, \\
where the dynamic parameters $\beta(t)$ and $\gamma(t)$ are driven by some ergodic stochastic process.
Let $\mathbb E(\beta(0))$ and $\mathbb E(\gamma(0))$ be the respective expectations of the stationary distributions of the process. Suppose convergences
$\lim_{t\to\infty}\int_{t_{0}}^{t_{0}+t}\beta(\tau)d\tau=\mathbb E(\beta(0))$ and \\
$\lim_{t\to\infty}\int_{t_{0}}^{t_{0}+t}\gamma(\tau)d\tau=\mathbb E(\gamma(0))$
are uniform with respect to $t_{0}$ almost surely.
If $\mathbb E(\beta(0))/\mathbb E(\gamma(0))>\lambda_1(A)$,
the dynamics converges to $I^*=(0,\ldots,0)$ almost surely; if
$\mathbb E(\beta(0))/\mathbb E(\gamma(0))<\lambda_1(A)$,
there might exist infected nodes in the equilibrium.
\end{theorem}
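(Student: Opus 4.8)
The strategy is to trap the nonlinear dynamics between the zero trajectory and the trajectory of a \emph{linear} time-varying system, and then to read off the latter's fate from the time-averages of $\beta(t)$ and $\gamma(t)$, which the ergodicity hypothesis pins down almost surely. Write $i(t)=(i_1(t),\ldots,i_n(t))^{\top}$; since $G$ does not depend on $t$, $A_{vu}(t)\equiv A_{vu}$. First I would note that $[0,1]^{n}$ is forward-invariant for the master equation (on the face $i_v=0$ the $v$-th drift equals $\xi_v(t)\ge 0$, on $i_v=1$ it equals $-\beta(t)i_v\le 0$), so $0\le i(t)\le\mathbf 1$ for all $t\ge 0$. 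Next, the union-bound inequality $1-\prod_u(1-a_u)\le\sum_u a_u$ applied with $a_u=A_{vu}i_u(t)\gamma(t)\in[0,1]$, together with $1-i_v\le 1$, gives the componentwise differential inequality $\dot i(t)\le M(t)\,i(t)$, where $M(t)=\gamma(t)A-\beta(t)I_n$ has non-negative off-diagonal entries (a Metzler matrix). For Metzler $M(\cdot)$ the comparison principle applies: if $y(t)$ solves $\dot y=M(t)y$ with $y(0)=i(0)$ then $w:=y-i$ obeys $\dot w\ge M(t)w$, $w(0)\ge 0$, whence $w(t)\ge 0$; combined with the invariance, $0\le i(t)\le y(t)$ for all $t\ge 0$, and it suffices to prove $y(t)\to 0$.

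Assuming $G$ connected, Perron--Frobenius gives a strictly positive left eigenvector $u\gg 0$ with $u^{\top}A=\lambda_1(A)u^{\top}$ and $\lambda_1(A)>0$. Setting $V(t)=u^{\top}y(t)>0$ produces the scalar ODE $\dot V=(\gamma(t)\lambda_1(A)-\beta(t))V$, hence
\[
V(t)=V(0)\,\exp\!\left(\int_0^t\bigl(\gamma(\tau)\lambda_1(A)-\beta(\tau)\bigr)\,d\tau\right).
\]
The ergodicity hypothesis forces $\frac1t\int_0^t(\gamma(\tau)\lambda_1(A)-\beta(\tau))\,d\tau\to\lambda_1(A)\,\mathbb E(\gamma(0))-\mathbb E(\beta(0))$ almost surely; when $\mathbb E(\beta(0))/\mathbb E(\gamma(0))>\lambda_1(A)$ this limit equals some $-\delta<0$, so the exponent is eventually $\le-\tfrac{\delta}{2}t$ and $V(t)\to 0$ a.s. Because $u\gg 0$ and $y(t)\ge 0$, $V(t)\to 0$ forces $y(t)\to 0$, hence $i(t)\to 0$ almost surely; this is the first assertion.

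For the converse ($\mathbb E(\beta(0))/\mathbb E(\gamma(0))<\lambda_1(A)$) I would argue that $I^*=(0,\ldots,0)$ is not attracting. A matching \emph{lower} bound holds near the origin: a Bonferroni inequality gives $1-\prod_u(1-a_u)\ge(1-\varepsilon)\sum_u a_u$ whenever $\|i\|$ is below a threshold $\rho(\varepsilon)>0$ (depending on $\varepsilon$ and $A$), so on that neighbourhood $\dot i(t)\ge\widetilde M(t)\,i(t)$ with $\widetilde M(t)=(1-\varepsilon)^2\gamma(t)A-\beta(t)I_n$, again Metzler. Suppose $i(t)\to 0$ for some $i(0)\ne 0$; then $\|i(t)\|\le\rho$ for all $t\ge T$ (a random $T$), connectivity makes $i(T)\gg 0$, and the comparison principle with the same $V=u^{\top}(\cdot)$ trick yields $u^{\top}i(t)\ge u^{\top}i(T)\exp\!\left(\int_T^t\bigl((1-\varepsilon)^2\gamma(\tau)\lambda_1(A)-\beta(\tau)\bigr)\,d\tau\right)$. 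Choosing $\varepsilon$ small enough that $(1-\varepsilon)^2\lambda_1(A)\,\mathbb E(\gamma(0))>\mathbb E(\beta(0))$, the exponent tends to $+\infty$, contradicting $u^{\top}i(t)\to 0$. Hence for every $i(0)\ne 0$ the dynamics does not reach the clean state, i.e.\ infected nodes survive in the long run.

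The delicate points are: (a) making the differential inequalities and the comparison principle for piecewise-continuous Metzler systems fully rigorous; (b) carrying out the Perron--Frobenius step for a general, possibly directed, irreducible non-negative $A$ rather than only the symmetric case; and (c) in the converse, invoking the ergodic time-average starting from the \emph{random} time $T$, which is exactly why the hypothesis demands that the convergence of the averages be \emph{uniform in $t_0$}. I expect (c) to be the main obstacle; note, however, that since the theorem asserts only that infected nodes ``might'' persist, it is enough to show that $I^*$ is unstable, which lightens the burden in (c).
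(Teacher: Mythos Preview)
The paper does not prove this theorem; it is quoted verbatim from \cite{XuTAAS2010} as a known result (``We first recall the following theorem from \cite{XuTAAS2010}''), so there is no in-paper proof to compare your proposal against.

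That said, your argument is a correct and standard route to this statement: upper-bound the nonlinear drift by the linear Metzler system $\dot y=(\gamma(t)A-\beta(t)I_n)y$, collapse to a scalar via a strictly positive left Perron eigenvector, and read off exponential decay from the ergodic time average of $\gamma(t)\lambda_1(A)-\beta(t)$. It is worth noting that the paper's own proof of Theorem~\ref{dsvsss-theorem} (where the \emph{structure} $A_{\sigma_t}$ switches) uses the identical linearisation-and-comparison step to pass from $i(t)$ to $y(t)$, but then proceeds with a quadratic Lyapunov function $y^\top P(\sigma_t)y$ and Dynkin's formula rather than a linear functional---precisely because no common Perron vector is available across the different $A_{\sigma_t}$. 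When $A$ is fixed, as here, your linear functional $u^\top y$ is both simpler and delivers the sharp threshold $\lambda_1(A)$ directly.

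Your caveats (a)--(c) are well placed. For (c), the uniformity-in-$t_0$ hypothesis is exactly what licenses restarting the ergodic average at the random entrance time $T$, so no extra machinery is needed. One small gap in the converse: the claim ``connectivity makes $i(T)\gg 0$'' requires that $\gamma(t)>0$ on a set of positive measure near $t=0$ so that the infection actually propagates along edges; this is implicit in the setting but should be stated.
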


\begin{corollary}
\label{corollary:single-configuration}
Consider configurations $(G,\beta(t),\gamma(t))$, where $(\beta(t),\gamma(t))$ are driven by a homogeneous Markov process $\eta_t$
with steady-state distribution $[\pi_1,\cdots,\pi_{N}]$ and support \\
$\{(\beta_1,\gamma_1),\ldots,(\beta_N,\gamma_N)\}$,
meaning $\mathbb E(\beta_{\eta_t})=\pi_1\beta_1+\cdots+\pi_N\beta_N$ and $\mathbb E(\gamma_{\eta_t})=\pi_1\gamma_1+\cdots+\pi_N\gamma_N$.
If
\begin{align} \label{corollary:ave1}
\frac{\pi_1\beta_1+\cdots+\pi_N\beta_N}{\pi_1\gamma_1+\cdots+\pi_N\gamma_N}>\lambda_1(A),
\end{align}
the dynamics will converge to $I^*=(0,\ldots,0)$; if
\begin{align*}
\frac{\pi_1\beta_1+\cdots+\pi_N\beta_N}{\pi_1\gamma_1+\cdots+\pi_N\gamma_N}<\lambda_1(A),
\end{align*}
the dynamics will not converge to $I^*=(0,\ldots,0)$ at least for some initial value scenarios.
\end{corollary}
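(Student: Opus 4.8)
The plan is to obtain Corollary~\ref{corollary:single-configuration} as a direct specialization of Theorem~\ref{sd_theorem}. I would set $\beta(t):=\beta_{\eta_t}$ and $\gamma(t):=\gamma_{\eta_t}$; since the support $\{(\beta_1,\gamma_1),\ldots,(\beta_N,\gamma_N)\}$ is finite and $\eta_t$ is a right-continuous homogeneous Markov process with exponential holding times (cf.\ the $\EXP(\cdot)$ convention in Table~\ref{table:notations}), these are bounded, piecewise-constant functions of $t$. A homogeneous Markov process on a finite state space that admits the steady-state distribution $[\pi_1,\cdots,\pi_N]$ is ergodic once we assume the natural irreducibility (so the distribution is unique and charges every state of the recurrent class), and then the stationary expectations are exactly $\mathbb E(\beta_{\eta_t})=\pi_1\beta_1+\cdots+\pi_N\beta_N$ and $\mathbb E(\gamma_{\eta_t})=\pi_1\gamma_1+\cdots+\pi_N\gamma_N$, as asserted in the statement.

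The one hypothesis of Theorem~\ref{sd_theorem} that still needs checking is the uniform-in-$t_0$, almost-sure convergence of the time averages $\frac1t\int_{t_0}^{t_0+t}\beta(\tau)\,d\tau\to\mathbb E(\beta(0))$, and similarly for $\gamma$. Writing $\frac1t\int_{t_0}^{t_0+t}\beta(\tau)\,d\tau=\sum_{k=1}^{N}\beta_k\,L_k(t_0,t_0+t)/t$, where $L_k(a,b)$ is the total time spent in state $k$ during $[a,b]$, this reduces to showing that each occupation fraction $L_k(t_0,t_0+t)/t$ tends to $\pi_k$ uniformly over $t_0$ almost surely. I would derive this from the regenerative structure of the finite irreducible chain: cutting the trajectory at successive return times to a fixed reference state yields i.i.d.\ excursions whose lengths and per-state occupations have all moments finite (the generator is a finite matrix), so a strong law over excursions, together with the fact that a window of length $t$ straddles only $O(t)$ excursions regardless of where it starts, gives the stated uniform convergence. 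This uniform ergodic estimate is the only real content of the proof; everything else is bookkeeping, and is where the ``homogeneous Markov process with a steady-state distribution'' hypothesis (as opposed to a general ergodic process) is actually used.

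Finally, with every hypothesis of Theorem~\ref{sd_theorem} verified, I would read off its conclusion with the ratio $\mathbb E(\beta(0))/\mathbb E(\gamma(0))=(\pi_1\beta_1+\cdots+\pi_N\beta_N)/(\pi_1\gamma_1+\cdots+\pi_N\gamma_N)$: when this quantity exceeds $\lambda_1(A)$ the dynamics converges to $I^*=(0,\ldots,0)$ almost surely, which is precisely condition~\eqref{corollary:ave1}, and when it is strictly below $\lambda_1(A)$ infected nodes may persist in the equilibrium for some initial values, giving the converse clause. The main obstacle, as noted, is the uniform-in-$t_0$ ergodic theorem for the occupation fractions; if one is content to cite a standard uniform strong law for finite continuous-time Markov chains, the corollary is essentially immediate.
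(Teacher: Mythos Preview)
Your proposal is correct and follows exactly the approach the paper intends: the paper states Corollary~\ref{corollary:single-configuration} immediately after Theorem~\ref{sd_theorem} with no proof, treating it as the obvious specialization of the theorem to a finite-state homogeneous Markov process. Your verification of the uniform-in-$t_0$ ergodic hypothesis via the regenerative structure is more detail than the paper supplies, but it is precisely the content that makes the specialization legitimate.
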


\subsection{Characterizing Power of MTD without Considering Cost}

In this case, despite that $\C_1=(G,\beta_1,\gamma_1)$ violates condition (\ref{eqn:static-epidemic-threshold}),
the system needs to stay as much as possible in configuration $\C_1$.
Fortunately, the defender can exploit MTD to make the overall dynamics converge to $I^*=(0,\ldots,0)$.
This is possible because MTD can induce configurations $\C_j=(G,\beta_j,\gamma_j)$ for $j=2,\ldots,N$,
where each $\C_j$ satisfies condition (\ref{eqn:static-epidemic-threshold}).
Denote by $\mu_j=\beta_j-\gamma_j\lambda_1(A)$ for $j=1,\cdots,N$.
Without loss of generality, suppose $\mu_1<0<\mu_2<\cdots<\mu_N$.
According to Corollary \ref{corollary:single-configuration}, if
\begin{align}\label{without-cost1}
\pi_1\mu_1+\cdots+\pi_N\mu_N>0,
\end{align}
then the dynamics will converge to $I^*=(0,\ldots,0)$.
Since inequality (\ref{without-cost1}) is strict and is a linear function of $\pi_1$,
in order to reach the maximum $\pi_1^*$
we need to introduce a sufficiently small constant $0<\delta\ll1$ and replace condition (\ref{without-cost1}) with
\begin{align}\label{without-cost2}
\pi_1\mu_1+\cdots+\pi_N\mu_N\geq\delta.
\end{align}
Theorem \ref{theorem:sd-ss} constructively identifies the maximum $\pi_1^*$, the maximal portion of time the system can afford to stay in $\C_1$.

\begin{theorem}
\label{theorem:sd-ss}
Suppose configuration $\C_1=(G,\beta_1,\gamma_1)$  violates condition (\ref{eqn:static-epidemic-threshold}).
Suppose MTD-induced configurations $\C_j=(G,\beta_j,\gamma_j)$ for $j=2,\cdots,N$ satisfy condition (\ref{eqn:static-epidemic-threshold})
as $0<\mu_2<\cdots<\mu_N$.
The maximal potion of time the system can afford to stay in configuration $\C_1$ is
\begin{equation}
\label{eq:pi-1-*}
\pi_1^*=\frac{\mu_N-\delta}{\mu_N-\mu_1},
\end{equation}
while the system will stay in configurations $\C_2,\ldots,\C_N$ respectively with portions of time given by
\begin{eqnarray*}
\pi^{*}_{2}=\cdots=\pi^{*}_{N-1}=0,\quad\pi^{*}_{N}=\frac{\delta-\mu_{1}}{\mu_{N}-\mu_{1}}.
\end{eqnarray*}
In other words, MTD is $(\mu_1,\cdots,\mu_{N},\pi_1^*)$-powerful.
\end{theorem}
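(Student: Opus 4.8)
The plan is to set this up as a linear program in the unknown portions of time $\pi_1,\ldots,\pi_N$ and read off the optimum. By Corollary~\ref{corollary:single-configuration}, the overall dynamics converges to $I^*=(0,\ldots,0)$ whenever the weighted average $\sum_{j=1}^N \pi_j\mu_j$ is positive; as argued just before the theorem, we replace this strict inequality by $\sum_{j=1}^N \pi_j\mu_j \ge \delta$ for a fixed small $\delta>0$ so that the feasible region is closed and the maximum is attained. The $\pi_j$ must also satisfy $\pi_j\ge 0$ and $\sum_{j=1}^N \pi_j = 1$ to be a valid steady-state distribution. So the task is precisely:
\begin{align*}
\text{maximize } \pi_1 \quad \text{subject to} \quad \sum_{j=1}^N \pi_j = 1,\ \sum_{j=1}^N \pi_j\mu_j \ge \delta,\ \pi_j \ge 0.
\end{align*}

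First I would observe that, since $\mu_1 < 0 < \mu_2 < \cdots < \mu_N$, to make $\pi_1$ as large as possible we want the remaining probability mass $1-\pi_1$ to contribute as much as possible to the sum $\sum_j \pi_j\mu_j$, so that the constraint $\sum_j \pi_j\mu_j \ge \delta$ still leaves room for a large (negative-weighted) $\pi_1$. Since $\mu_N$ is the largest coefficient, the efficient choice is to put all of the non-$\pi_1$ mass on configuration $\C_N$: set $\pi_2 = \cdots = \pi_{N-1} = 0$ and $\pi_N = 1-\pi_1$. The threshold constraint then reads $\pi_1\mu_1 + (1-\pi_1)\mu_N \ge \delta$, which rearranges to $\pi_1 \le (\mu_N-\delta)/(\mu_N-\mu_1)$, giving the claimed $\pi_1^* = (\mu_N-\delta)/(\mu_N-\mu_1)$ and correspondingly $\pi_N^* = 1-\pi_1^* = (\delta-\mu_1)/(\mu_N-\mu_1)$; note both lie in $[0,1]$ because $\mu_1<0<\delta<\mu_N$ (assuming $\delta<\mu_N$, which is legitimate since $\delta$ is chosen arbitrarily small).

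To make this rigorous I would argue optimality directly rather than invoking LP duality: for \emph{any} feasible $(\pi_1,\ldots,\pi_N)$, bound $\delta \le \sum_{j=1}^N \pi_j\mu_j \le \pi_1\mu_1 + (1-\pi_1)\mu_N$, where the second inequality uses $\mu_j \le \mu_N$ for all $j\ge 2$ together with $\sum_{j\ge 2}\pi_j = 1-\pi_1$; rearranging yields $\pi_1 \le (\mu_N-\delta)/(\mu_N-\mu_1)$, and the explicit choice above attains equality, so it is optimal. Finally I would confirm feasibility of the proposed point: it satisfies the normalization and nonnegativity constraints and meets the threshold constraint with equality, hence by Corollary~\ref{corollary:single-configuration} the dynamics converges. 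The main (minor) obstacle is simply being careful that the constructed $\pi^*$ is a bona fide steady-state distribution of \emph{some} homogeneous Markov process with the given support—any distribution on a finite set is realizable as the stationary distribution of an irreducible chain, so this is immediate—and noting that the boundary case where some $\C_j$ with $2\le j<N$ has the same $\mu_j$ value is excluded by the strict ordering hypothesis, so the maximizer is in fact unique.
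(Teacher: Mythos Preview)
Your proposal is correct and follows essentially the same approach as the paper: both arguments bound $\sum_{j\ge 2}\pi_j\mu_j$ above by $(1-\pi_1)\mu_N$ using $\mu_j\le\mu_N$, rearrange the constraint $\sum_j\pi_j\mu_j\ge\delta$ to get $\pi_1\le(\mu_N-\delta)/(\mu_N-\mu_1)$, and then observe that the choice $\pi_2=\cdots=\pi_{N-1}=0$, $\pi_N=1-\pi_1^*$ attains equality. Your additional remarks on LP framing, feasibility, and Markov realizability are fine embellishments but not needed for the core argument.
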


\begin{proof}
Eq. (\ref{without-cost2}) implies
\begin{align}\label{without-cost-proof1}
\pi_1&\leq \frac{\pi_2\mu_2+\cdots+\pi_N\mu_N-\delta}{-\mu_1}\leq\frac{(\pi_2+\cdots+\pi_N)\mu_N-\delta}{-\mu_1}\\
\nonumber&=\frac{(1-\pi_1)\mu_N-\delta}{-\mu_1},
\end{align}
which means
\begin{align}\label{without-cost-proof2}
\pi_1\leq\frac{\mu_N-\delta}{\mu_N-\mu_1}.
\end{align}
Moreover, this maximum $\pi_1^*$ can be reached
if all the equalities in Eqs. (\ref{without-cost-proof1}) and (\ref{without-cost-proof2}) hold, namely
\begin{eqnarray*}
\pi_{2}=\cdots=\pi_{N-1}=0,\quad\pi_{N}=\frac{\delta-\mu_{1}}{\mu_{N}-\mu_{1}}.
\end{eqnarray*}
This means that the defender only needs to launch the MTD that induces configuration $\C_N=(G,\beta_{N},\gamma_{N})$.
\end{proof}

Theorem \ref{theorem:sd-ss} says that although the defender can launch MTD to induce a set of $(N-1)$ configurations,
$\C_2,\ldots,\C_N$ with $0<\mu_2<\cdots<\mu_N$,
only $\C_N$ matters.
This means that $\mu_k$ is indicative of the capability of a configuration.
Figure \ref{fig:pi-1-dependence} plots the dependence of $\pi_1^*$ on $-\mu_1$ and $\mu_N$ with $\delta=10^{-5}$.
We observe that for fixed $\mu_1$, the maximum portion of time $\pi_1^*$ monotonically non-decreases in $\mu_N$.
For example, by fixing $\mu_1=-0.4$, $\pi_1^*$ is a non-decreasing curve in $\mu_N$.

\begin{figure}[H]
\centering
\includegraphics[width=0.46\textwidth]{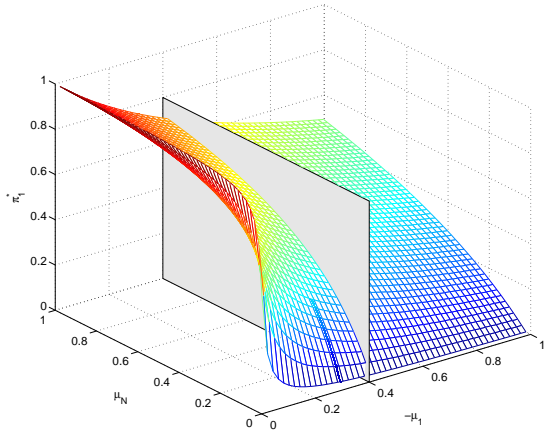}
\caption{Dependence of $\pi_1^*$ on $-\mu_1$ and $\mu_N$.
}
\label{fig:pi-1-dependence}
\end{figure}

\subsection{Characterizing Power of MTD while Considering Cost}
\label{sec_cost}

In this case, configuration $\C_1=(G,\beta_1,\gamma_1)$ is given and the time the system must stay in $\C_1$ is predetermined as $\pi_1$.
The defender wants to deploy MTD to make the overall dynamics converge to $I^*=(0,\ldots,0)$,
while minimizing the cost of launching MTD.
Denote the MTD-induced configurations by $\C_j=(G,\beta_j,\gamma_j)$ for $2\leq j\leq N$.
Note that cost may only be considered for $N\geq 3$
because when $N=2$, it is more meaningful to maximize $\pi_1^*$ (i.e., the preceding case).

Since we have proved that $\pi_1\leq \pi_1^* = \frac{\mu_N-\delta}{\mu_N-\mu_1}$ is necessary to make the dynamics converge to $I^*=(0,\ldots,0)$,
$\frac{\mu_N-\delta}{\mu_N-\mu_1}$ is the natural upper bound on $\pi_1$ (i.e., if $\pi_1$ is above the upper bound, we cannot assure
the dynamics will converge to $I^*=(0,\ldots,0)$, regardless of the cost).
Consider cost function $f(\cdot):{\mathbb R}^+\to {\mathbb R}^+$ as in Definition \ref{definition-measure-with-cost}.
such that $f(\mu_j)$ is the cost of launching MTD to induce configuration $\C_j$ for $2\leq j\leq N$, where $f'(\mu)\geq 0$ for $\mu>0$.
The objective is to minimize, for given cost function $f(\cdot)$ and any constant $f(\mu_1)$, the following cost:

\begin{align}\label{objective-function}
\Phi(\pi_{2},\cdots,\pi_{N})=\pi_1 f(\mu_1)+\sum_{j=2}^N\pi_j f(\mu_j)
\end{align}
subject to
\begin{align}\label{restrict_con1}
\frac{\pi_1 \beta_1+\sum_{j=2}^N\pi_j\beta_j}{\pi_1 \gamma_1+\sum_{j=2}^N\pi_j\gamma_j}>\lambda_1(A),~\pi_1 +\sum_{j=2}^N\pi_j=1,
~\pi_j\geq 0
\end{align}
Since $\frac{\pi_1 \beta_1+\sum_{j=2}^N\pi_j\beta_j}{\pi_1 \gamma_1+ \sum_{j=2}^N\pi_j\gamma_j}>\lambda_1(A)$ is equivalent to
$\pi_1 \mu_1+\sum_{j=2}^N\mu_j>0$,
Eq. (\ref{restrict_con1}) is equivalent to:
\begin{align}\label{restrict_con2}
\sum_{j=2}^N\pi_j{\mu_j}>-\pi_1 \mu_1,~\sum_{j=2}^N\pi_j=1-\pi_1, ~\pi_j\geq 0.
\end{align}
Since the objective is linear and the optimal solution would get on bound of the non-closed constraint (\ref{restrict_con2}),
we need to introduce a sufficiently small constant $0<\delta\ll 1$ and replace constraint (\ref{restrict_con2}) with
\begin{align}\label{restrict_con3}
\sum_{j=2}^N\pi_j{\mu_j}\geq-\pi_1 \mu_1+\delta,
~\sum_{j=2}^N\pi_j=1-\pi_1 ,
~\pi_j\geq 0.
\end{align}
Theorem \ref{theorem:sd-ss-minimum-cost}
shows how to find the minimum cost $\Phi(\pi_{2}^*,\cdots,\pi_{N}^*)$ according to constraints (\ref{objective-function}) and (\ref{restrict_con3}),
and therefore gives an algorithm for the optimization problem.
Proof of Theorem \ref{theorem:sd-ss-minimum-cost} is deferred to the Appendix.

\begin{theorem}
\label{theorem:sd-ss-minimum-cost}
Suppose configuration $\C_1=(G,\beta_1,\gamma_1)$ violates condition (\ref{eqn:static-epidemic-threshold}).
Suppose MTD-induced configurations $\C_j=(G,\beta_j,\gamma_j)$ for $j=2,\cdots,N$ satisfy condition (\ref{eqn:static-epidemic-threshold}).
Suppose $\pi_1 $, where $0<\pi_1 \leq \frac{\mu_N-\delta}{\mu_N-\mu_1}$, is the potion of time the system must stay in $\C_1$.
Suppose $f(\cdot)$ is the cost function as discussed above.
Define
\begin{align}\label{sdvsss-k*}
\mu_{k^*}=\min\left\{\mu_k|\mu_k>\frac{-\pi_1 \mu_1}{(1-\pi_1 )},~2\leq k \leq N\right\}
\end{align}
 and for $2\leq l <m\leq N$,
\begin{align}\label{sdss-cost}
\nonumber F(\mu_l,\mu_m)=&\pi_1 f(\mu_1)+\frac{f(\mu_m)-f(\mu_l)}{\mu_m-\mu_l}(\delta-\pi_1 \mu_1)\\
&+\frac{\mu_mf(\mu_l)-\mu_lf(\mu_m)}{\mu_m-\mu_l}(1-\pi_1 ).
\end{align}
If $k^*=2$, the minimal cost under constraint (\ref{restrict_con3}) is
\begin{align*}
\min\limits_{\pi_{2},\cdots,\pi_{N}}\Phi(\pi_{2},\cdots,\pi_{N})=\pi_1 f(\mu_1)+(1-\pi_1 )f(\mu_2),
\end{align*}
which is reached by launching MTD to induce configuration $\C_{2}$ only.
If $k^*>2$, the minimal cost under constraint (\ref{restrict_con3}) is
\begin{align}\label{sdss-minimum-cost}
\min\limits_{\pi_{2},\cdots,\pi_{N}}\Phi(\pi_{2},\cdots,\pi_{N})=\min_{l< k^*\leq m}F(\mu_l,\mu_m).
\end{align}
Denote by $\{\mu_{l^*},\mu_{m^*}\}=\arg\min\limits_{l< k^*\leq m}F(\mu_l,\mu_m)$. The minimal cost
is reached by launching MTD to induce configurations $\C_{l^*},\C_{m^*}$ respectively with portions of time $[\pi_{l^*},\pi_{m^*}]$:
\begin{align}\label{eqn:sdvsss-deploy}
\left[\begin{array}{cc}
\pi_{l^*}\\
\pi_{m^*}
\end{array}\right]=\frac{1}{\mu_{m^*}-\mu_{l^*}}\left[\begin{array}{cc}
(\mu_{m^*}-\delta)+\pi_1 (\mu_1-\mu_{m^*})\\
-(\mu_{l^*}-\delta)+\pi_1 (\mu_{l^*}-\mu_1)
\end{array}\right],
\end{align}
where $0<\delta\ll 1$ is some constant.
That is, MTD is \\
 $(\mu_1,\mu_2,\cdots,\mu_N,\pi_1 ,\Phi)$-powerful.
\end{theorem}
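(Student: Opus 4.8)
The plan is to treat the problem in Eqs.~(\ref{objective-function})--(\ref{restrict_con3}) as a linear program in the $N-1$ variables $\pi_2,\dots,\pi_N$. Since $\pi_1$ is prescribed, $\pi_1 f(\mu_1)$ is an additive constant, the objective $\Phi$ is linear, and the feasible set is the bounded polytope cut out by the equality $\sum_{j=2}^N\pi_j=1-\pi_1$, the half-space $\sum_{j=2}^N\pi_j\mu_j\ge \delta-\pi_1\mu_1$, and the sign constraints $\pi_j\ge 0$. First I would observe that this polytope is nonempty precisely because of the hypothesis $\pi_1\le\frac{\mu_N-\delta}{\mu_N-\mu_1}$: rewriting it as $(1-\pi_1)\mu_N\ge\delta-\pi_1\mu_1$ shows that placing the whole mass $1-\pi_1$ on index $N$ is feasible. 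This also shows $\mu_{k^*}$ of Eq.~(\ref{sdvsss-k*}) exists with $2\le k^*\le N$; moreover, since the $\mu_k$'s are finitely many distinct reals, for $\delta$ small enough the strict-inequality definition of $k^*$ coincides with the $\delta$-relaxed version, i.e.\ $\mu_k>\frac{-\pi_1\mu_1}{1-\pi_1}\iff\mu_k\ge\frac{\delta-\pi_1\mu_1}{1-\pi_1}$.

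Next I would invoke the fundamental theorem of linear programming: a linear function over a nonempty bounded polytope attains its minimum at a vertex. Counting active constraints --- the equality is always active, and ``all $\pi_j=0$'' is infeasible --- every vertex has support of size at most $2$, of one of two kinds. \emph{Kind A}: support $\{j_0\}$, i.e.\ $\pi_{j_0}=1-\pi_1$; this point is feasible iff $(1-\pi_1)\mu_{j_0}\ge\delta-\pi_1\mu_1$, which (for $\delta$ small) means $j_0\ge k^*$. \emph{Kind B}: support $\{l,m\}$ with $l<m$, where the equality and the $\mu$-constraint are both active; solving the $2\times 2$ system $\pi_l+\pi_m=1-\pi_1$, $\mu_l\pi_l+\mu_m\pi_m=\delta-\pi_1\mu_1$ gives exactly the schedule of Eq.~(\ref{eqn:sdvsss-deploy}), whose entries are nonnegative iff $\mu_l\le\frac{\delta-\pi_1\mu_1}{1-\pi_1}\le\mu_m$, i.e.\ iff $l<k^*\le m$. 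Substituting this solution into $\Phi$ and collecting the coefficients of $\delta-\pi_1\mu_1$ and of $1-\pi_1$ turns $\Phi$ into precisely $F(\mu_l,\mu_m)$ of Eq.~(\ref{sdss-cost}).

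Finally I would compare the finitely many feasible vertices. If $k^*=2$, there is no admissible Kind-B vertex (it would require $l<2$ and $l\ge 2$), so the feasible vertices are the Kind-A ones with $j_0\ge 2$; the cheapest, by monotonicity of $f$, is $j_0=2$, giving the stated value $\pi_1 f(\mu_1)+(1-\pi_1)f(\mu_2)$ realized by deploying $\C_2$ only. If $k^*>2$, then for any $l<k^*$ a short computation gives $F(\mu_l,\mu_{k^*})-\bigl(\pi_1 f(\mu_1)+(1-\pi_1)f(\mu_{k^*})\bigr)=\pi_l\bigl(f(\mu_l)-f(\mu_{k^*})\bigr)\le 0$, so every feasible Kind-A vertex is dominated by a feasible Kind-B vertex; hence the minimum equals $\min_{l<k^*\le m}F(\mu_l,\mu_m)$, attained at the argmin pair $(\mu_{l^*},\mu_{m^*})$ with deployment given by Eq.~(\ref{eqn:sdvsss-deploy}). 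I expect the main obstacle to be the bookkeeping around $\delta$ --- aligning the strict-inequality definition of $k^*$ with the $\delta$-relaxed feasibility threshold and checking that the set of feasible vertices, hence the optimum, is stable as $\delta\downarrow 0$ --- together with the verification that feasible Kind-A vertices are never strictly better than the Kind-B candidates; both are routine but need care.
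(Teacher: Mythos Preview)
Your argument is correct and takes a genuinely different route from the paper's own proof. The paper proceeds via Lagrange multipliers: it substitutes $\pi_j=x_j^2$ to encode nonnegativity, introduces a slack variable $\zeta^2$ for the $\mu$-constraint, and then classifies stationary points of the Lagrangian according to whether one, two, or $m'\ge 3$ of the $x_{k_j}$ are nonzero; in the $m'\ge 3$ case it observes that the stationarity system forces the cost to collapse to a two-index expression $F(\mu_{k_1},\mu_{k_2})$, whence the optimum is never strictly improved by using more than two configurations. Your approach instead recognises the problem as a linear program on a bounded polytope and invokes the vertex theorem directly: since the feasible set lies in the hyperplane $\sum_{j\ge 2}\pi_j=1-\pi_1$ and is cut by one additional half-space plus the sign constraints, every vertex has support at most two, which immediately explains the ``at most two configurations'' phenomenon without any multiplier calculus. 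Your domination step, showing that each feasible single-support vertex at $j_0\ge k^*$ is weakly beaten by the two-support vertex $(l,k^*)$ via the identity $F(\mu_l,\mu_{k^*})-\bigl(\pi_1 f(\mu_1)+(1-\pi_1)f(\mu_{k^*})\bigr)=\pi_l\bigl(f(\mu_l)-f(\mu_{k^*})\bigr)\le 0$, is cleaner than the paper's comparison of the three cases. What the paper's approach buys is a mechanical recipe that would extend verbatim to smooth nonlinear objectives; what your approach buys is a structural explanation (support size $\le 2$ is forced by the face lattice, not by an algebraic coincidence) and avoids the somewhat delicate handling of the $m'\ge 3$ stationary points. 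Your stated caveat about the $\delta$ bookkeeping is the right place to be careful, and your claim that for all sufficiently small $\delta$ the vertex set and hence the optimiser are unchanged is correct since the $\mu_k$ are finitely many distinct reals.
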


\subsection{Algorithm for Orchestrating Optimal MTD}

When not considering cost, Theorem \ref{theorem:sd-ss} constructively gives a method for optimally launching MTD.
When considering {\em arbitrary} cost function $f(\cdot)$, Theorem \ref{theorem:sd-ss-minimum-cost} constructively
shows how to find the minimum cost $\Phi(\pi_{2}^*,\cdots,\pi_{N}^*)$ according to constraints (\ref{objective-function}) and (\ref{restrict_con3}),
and therefore gives a method for computing the minimum cost and the corresponding strategy for optimally launching MTD.
Theorems \ref{theorem:sd-ss}-\ref{theorem:sd-ss-minimum-cost} suggest many possible ways/algorithms to achieve the goal,
with Algorithm \ref{algorithm:sd-ss-strategy} being a concrete example.

\begin{algorithm}[hbt]
\caption{Launching optimal MTD (dynamic parameters)\label{algorithm:sd-ss-strategy}}
INPUT: initial configuration $\C_1=(G,\beta_1,\gamma_1)$,
MTD-induced configurations $\C_j=(G, \beta_j,\gamma_j)$ for $j=2,\cdots,N$ and $N\geq 2$, constant $a>0$ determining time resolution,
optional cost function $f(\cdot)$, $\delta$ ($0<\delta<<1$), optional $\pi_1$ \\
OUTPUT: Optimal MTD strategy

\begin{algorithmic}[1]
\IF{cost function is not given (i.e., no need to consider cost)}
\STATE Compute $\pi_1^*$ according to Eq. (\ref{eq:pi-1-*})
\WHILE{TRUE}
\STATE Wait for time $T_1\leftarrow \EXP(a/\pi_1^*)$ \COMMENT{system in $\C_1$}
\STATE Launch MTD to make system stay in $\C_N$ for time $T_{N}\leftarrow \EXP(a/(1-\pi_1^*))$
\STATE Stop launching MTD \COMMENT{system returns to $\C_1$}
\ENDWHILE
\ELSE
\STATE Compute $k^*$ according to Eq. (\ref{sdvsss-k*})
\IF{$k^*>2$}
\STATE Compute $\mu_{l^*},\mu_{m^*}$ according to Eq. (\ref{sdss-cost})
\STATE Compute $\pi_{l^*},\pi_{m^*}$ according to Eq. (\ref{eqn:sdvsss-deploy})
\ELSE
\STATE Set $l^*=m^*=2$ and $\pi_{l^*}=1-\pi_1$
\ENDIF
\STATE Wait for time $T_1\leftarrow \EXP(a/\pi_1)$  \COMMENT{system in $\C_1$} 
\STATE $j \leftarrow_R \{l^*,m^*\}$ ~~~~\COMMENT{$j=2$ when $l^*=m^*=2$}
\WHILE{TRUE}
\STATE Launch MTD to make system stay in $\C_{j }$ for time $T\leftarrow \EXP(a/\pi_{j })$ \COMMENT{system in $\C_j$}
\STATE Set $\Delta= \{1,l^*,m^*\} - \{j \}$
\STATE $j \leftarrow_R \Delta$  ~~~~\COMMENT{$j=1$ when $l^*=m^*=2$}
\IF{$j=1$}
\STATE Stop launching MTD and wait for time $T_1\leftarrow \EXP(a/\pi_1)$  \COMMENT{system in $\C_1$} 
\STATE $j \leftarrow_R \{l^*,m^*\}$
\ENDIF
\ENDWHILE
\ENDIF
\end{algorithmic}
\end{algorithm}

Specifically, lines 2-7 describe the algorithm corresponding to Theorem \ref{theorem:sd-ss} (i.e., not considering cost), where
line 4 instructs the defender not to launch MTD so that the system stays in configuration $\C_1$ for a period of time $T_1$, and
line 5 instructs the defender to launch MTD to make the system stay in configuration $\C_N$ for the period of time $T_N$.
On the other hand, lines 9-26 describe the algorithm corresponding to Theorem \ref{theorem:sd-ss-minimum-cost} (i.e., considering cost).
If $k^*=2$, the defender needs to make the cyber system stay alternatively in configurations $\C_1$ and $\C_2$.
If $k^*>2$, the defender needs to make the cyber system stay alternatively in configurations $\C_1$ for a period of time $T_1$,
 in configuration $\C_{l^*}$ for a period of time $T_{l^*}$ and/or in configuration $\C_{m^*}$ for a period of time $T_{m^*}$.
Depending on the random coins flipped on lines 21 and 24, possible configuration sequences include:
$\C_1,\C_{l^*},\C_{m^*},\C_1,\ldots$ and $\C_1,\C_{l^*},\C_{m^*},\C_{l^*},\C_1,\ldots$.
Another algorithm for achieving the same goal it to make the system in $\C_1,\C_{l^*},\C_{m^*},\C_1,\C_{l^*},\C_{m^*},\ldots$
periodically for periods of time $T_1,T_{l^*},T_{m^*}$, respectively.

The computational complexity of Algorithm \ref{algorithm:sd-ss-strategy} is straightforward.
When not considering cost, the algorithm incurs $O(1)$ computational complexity.
When considering cost, line 9 incurs $O(N)$ complexity for searching $k^*$ according to \eqref{sdvsss-k*}, line 11 incurs
$O(N^2)$ complexity for searching the optimal $l^*$ and $m^*$ according to \eqref{sdss-cost},
and all other steps incur $O(1)$ complexity.

\subsection{Simpler Algorithm for Convex and Concave Cost Functions}

Algorithm \ref{algorithm:sd-ss-strategy} applies to {\em arbitrary} cost function $f(\cdot)$.
We make a further observation on Theorem \ref{theorem:sd-ss-minimum-cost}, which says that
for any given $0<\pi_1^*\leq \frac{\mu_N-\delta}{\mu_N-\mu_1}$ and cost function $f(\cdot)$,
if $k^*=2$, the minimum cost is reached by inducing configuration $\C_2$;
if $k^*>2$, Eqs. (\ref{sdss-cost}) and (\ref{sdss-minimum-cost}) indicate that the minimum cost is dependent upon the property of $f(\cdot)$.
Now we show that when $f(\cdot)$ is convex or concave, which may hold for most scenarios, we can obtain closed-form results on $l^{*},m^*$,
and therefore Algorithm \ref{algorithm:sd-ss-strategy} is naturally simplified.
Recall that $\mu_2<\cdots<\mu_N$.
Define $R(\mu_l,\mu_m)=\frac{f(\mu_m)-f(\mu_l)}{\mu_m-\mu_l}$. It can be verified that
\begin{align*}
F(\mu_l,\mu_m)=&\pi_1^*f(\mu_1)+(1-\pi_1^*)f(\mu_l)\\
&+R(\mu_l,\mu_m)[(\delta-\pi_1^*\mu_1)-\mu_l(1-\pi_1^*)]\\
=&\pi_1^*f(\mu_1)+(1-\pi_1^*)f(\mu_m)\\
&+R(\mu_l,\mu_m)[(\delta-\pi_1^*\mu_1)-\mu_m(1-\pi_1^*)].
\end{align*}

\begin{itemize}
\item If $f(\cdot)$ is convex, namely $f''(\cdot)\geq0$, then for fixed $\mu_l$ (or $\mu_m$), $R(\mu_l,\mu_m)$ is monotonically non-decreasing in $\mu_m$ (or $\mu_l$).
Note that $\mu_l< \frac{\delta-\pi_1^*\mu_1}{1-\pi_1^*}\leq \mu_m$, where $\delta\ll 1$.
For fixed $\mu_l$ (or $\mu_m$), $F(\mu_l,\mu_m)$ is monotonically non-decreasing (or non-increasing) in $\mu_m$ (or $\mu_l$). The minimum cost is
    \begin{align}\label{sdvsss-optimal-convex}
    \min\limits_{l< k^*\leq m}F(\mu_l,\mu_m)=F(\mu_{k^*-1},\mu_{k^*}).
    \end{align}
Having identified $\mu_{k^*-1},\mu_{k^*}$, one can compute $\pi_{k^*-1},\pi_{k^*}$ according to \eqref{eqn:sdvsss-deploy}.
Thus, lines 11 and 12 are simplified by this analytical result, with the complexity of searching for the optimal solution (i.e., $k^*$ in this case) reduced to $O(N)$.
\item If $f(\cdot)$ is concave, namely $f''(\cdot)\leq 0$, then
for fixed $\mu_l$ (or $\mu_m$), $R(\mu_l,\mu_m)$ is monotonically non-increasing (or non-decreasing) in $\mu_m$ (or $\mu_m$). The minimum cost is
\begin{align}\label{sdvsss-optimal-concave}
\min\limits_{l< k^*\leq m}F(\mu_l,\mu_m)=F(\mu_{2},\mu_{N}).
\end{align}
Similarly, having identified $\mu_{2},\mu_{N}$, one can compute $\pi_{2},\pi_{N}$ according to \eqref{eqn:sdvsss-deploy}.
Thus, lines 11 and 12 are simplified by this analytical result, with the complexity of searching for the optimal solution reduced to $O(1)$.
\end{itemize}
The above discussion suggests the following:
If $f(\cdot)$ is convex,
the defender only needs to launch MTD to induce configurations $\C_{k^*-1},\C_{k^*}$;
if $f(\cdot)$ is concave,
the defender only needs to launch MTD to induce configurations $\C_2,\C_N$.

To illustrate the influence of $f(\cdot)$ on the power of MTD, we set $N=4$,
$(\beta_1,\gamma_1)=(0.2,0.00422)$, $(\beta_2,\gamma_2)=(0.4,0.000845)$, $(\beta_3,\gamma_3)=(0.6,0.00169)$, $(\beta_4,\gamma_4)=(0.8,0.00169)$,
$\delta=10^{-5}$, $\lambda_1(A)=118.4$, $\mu_1-\delta\approx-0.3$, $\mu_2-\delta \approx0.3$, $\mu_3-\delta \approx0.4$, and $\mu_4-\delta \approx0.6$.
From Eq. (\ref{eq:pi-1-*}), we get $\pi_1^*\leq\frac{\mu_4-\delta}{\mu_4-\mu_1}=\frac{2}{3}$. We set $\pi_1^*=\frac{3}{5}$, which means $k^*=4$.
Figure \ref{fig:optimal-f} plots the total cost $\Phi$ with different cost functions $f(\cdot)$, where
$\pi_4=1-\pi_1-\pi_2-\pi_3$.
The shadow area in the $\pi_2\pi_3$-plane is the constrain slope of $\pi_2,\pi_3$ with respect to condition (\ref{restrict_con3}).
Note that $\Phi =\pi_1^*f(\mu_1)+\pi_2f(\mu_2)+\pi_3f(\mu_3)+(1-\pi_1^*-\pi_2-\pi_3)f(\mu_4)$,
which is linear non-increasing in $\pi_2$ for fixed $\pi_3$ (also in $\pi_3$ for fixed $\pi_2$).
For convex function $f(x)=100(x+0.1)^2$, the above analysis revealed that the minimum cost is reached at $[\pi_3,\pi_4]=[\pi_{l^*},\pi_{m^*}]$ as given by Eq.
(\ref{eqn:sdvsss-deploy}), namely by launching MTD to induce configurations $\C_3,\C_4$.
Figure \ref{fig:optimal-f1} shows that the minimum cost is reached at $[\pi_2,\pi_3,\pi_4]=[0,0.3,0.1]$ and the minimum cost is $14.6$,
which matches the analytic result given by Eq. \eqref{sdvsss-optimal-convex}.
For concave function $f(x)=10\sqrt{x+0.5}$,
Figure \ref{fig:optimal-f2} shows that
the minimum cost is reached at $[\pi_1,\pi_2,\pi_3]=[0.2,0,0.2]$ and the minimum cost is $6.5696$, which matches the analytic result given by Eq. \eqref{sdvsss-optimal-concave}.

\begin{figure}[!htbp]
\centering
\subfigure[Convex cost function $f(x)=100(x+0.1)^2$]{
\includegraphics[width=0.46\textwidth]{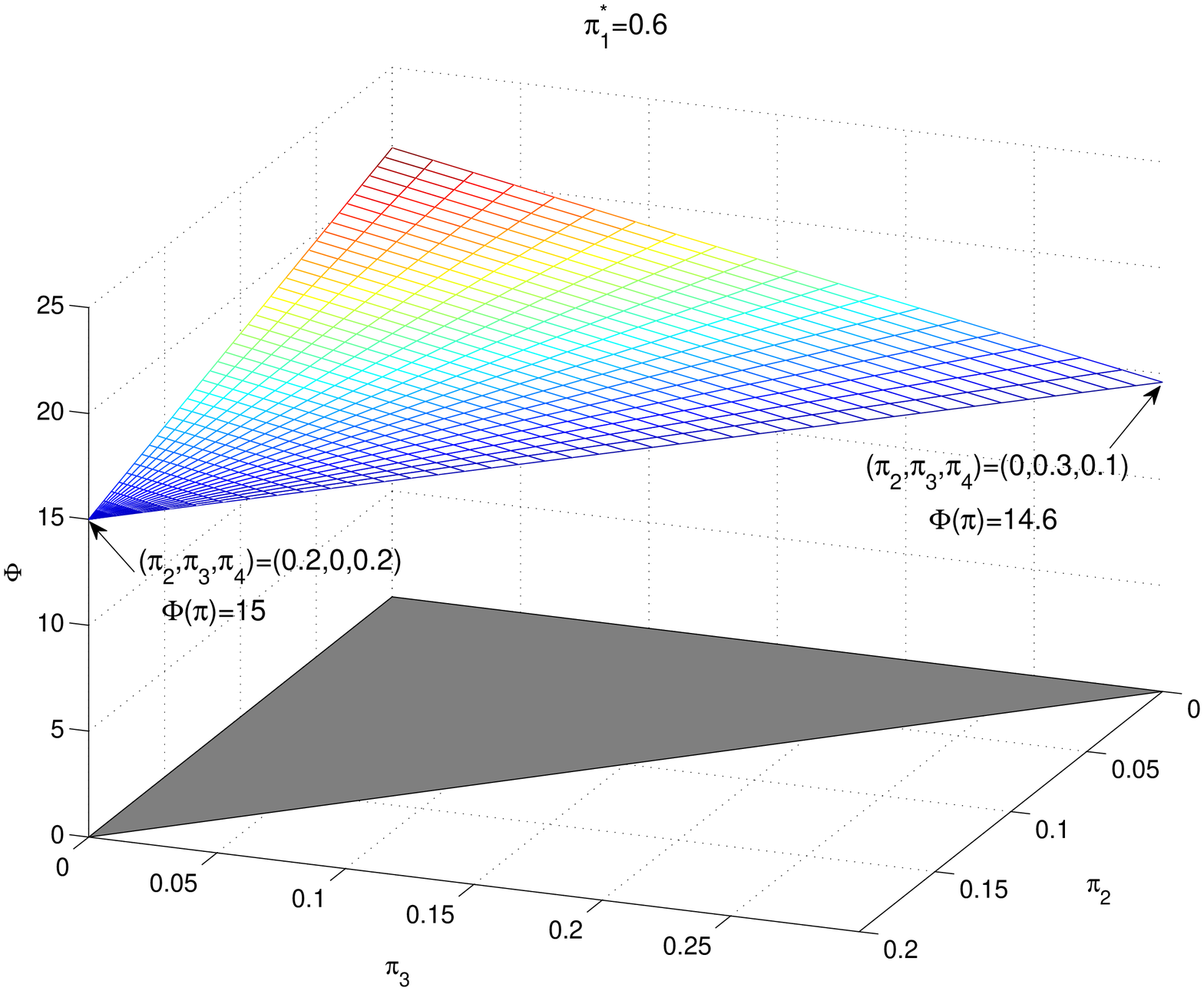}
\label{fig:optimal-f1}
}
\subfigure[Concave cost function $f(x)=10\sqrt{x+0.5}$]{
\includegraphics[width=0.46\textwidth]{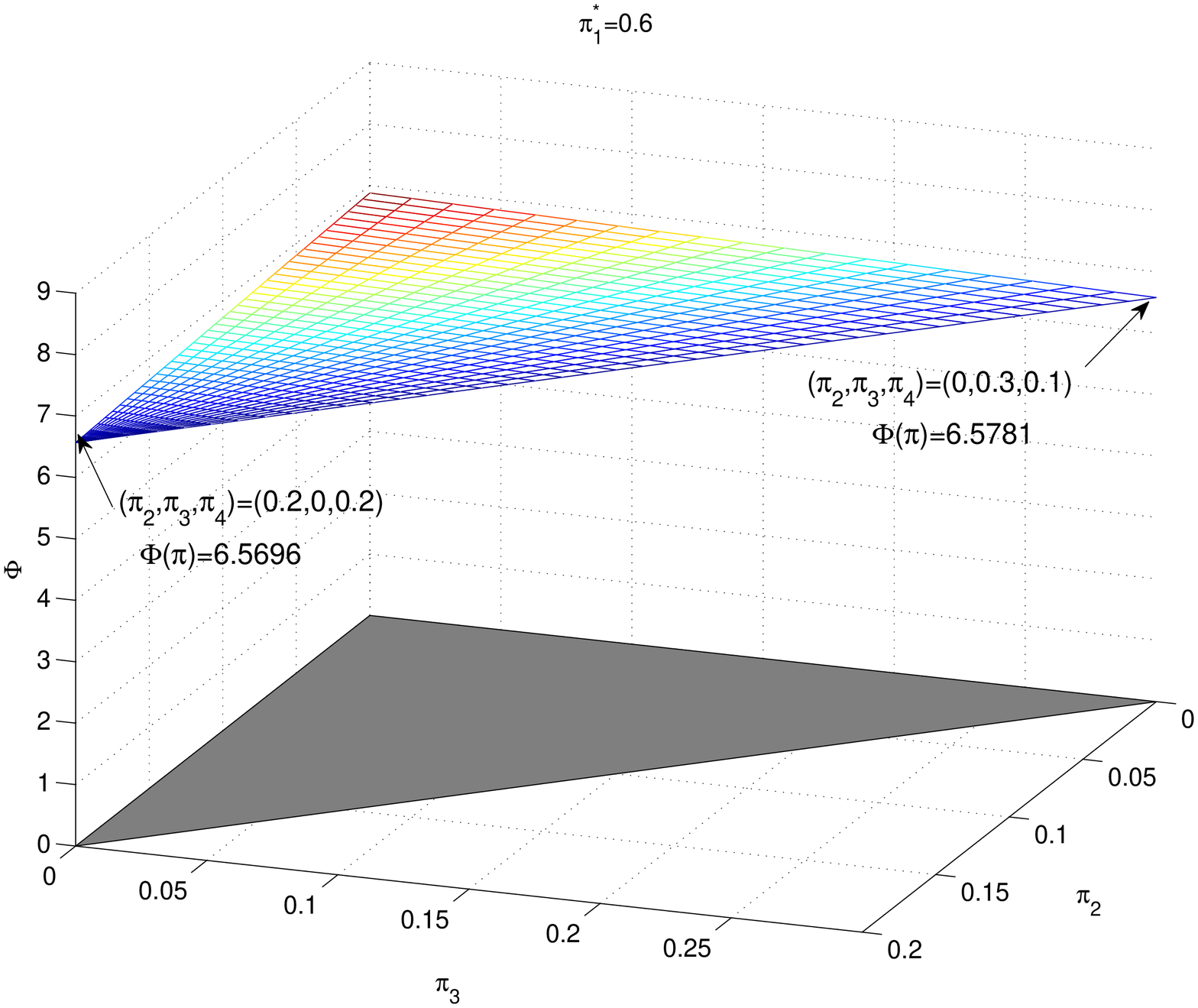}
\label{fig:optimal-f2}
}
\caption{Dependence of $\Phi$ on $\pi_1,\pi_2$ under different cost functions $f(\cdot)$}
\label{fig:optimal-f}
\end{figure}

\section{Power of MTD Inducing Dynamic Structures}
\label{dsvsss}

In this section, we characterize the power of MTD that induces
dynamic attack-defense structures $G(t)$, while the parameters $(\beta,\gamma)$ are kept intact.
More specifically, suppose configuration $\C_1=(G_1,\beta,\gamma)$ violates condition  \eqref{eqn:static-epidemic-threshold}.
Suppose MTD-induced configurations $\C_l=(G_l,\beta,\gamma)$ for $l=2,\cdots,N'$ and $N'\geq 2$ satisfy condition  \eqref{eqn:static-epidemic-threshold}.
We want to identify a Markov process strategy $\sigma_t$, defined over $\C_1,\C_2,\ldots,\C_{N'}$,
to make the dynamics converge to equilibrium $I^*=(0,\ldots,0)$,
while staying in configuration $\C_1$ as much as possible or minimizing the cost of launching MTD.
Throughout this section, let $A_l$ be the adjacency matrix of $G_l$ for $l=2,\ldots,N'$.

We start with a general result where one or more configurations violating condition \eqref{eqn:static-epidemic-threshold}.

\begin{theorem}\label{dsvsss-theorem}
Consider $\C_l=(G_l,\beta,\gamma)$ for $l=1,\cdots,N'$,
where $\C_\ell=(G_\ell,\beta,\gamma)$ for some $1\leq \ell \leq j$ violate condition  \eqref{eqn:static-epidemic-threshold}
but $\C_k=(G_k,\beta,\gamma)$ for some $j<k\leq N'$ satisfy condition  \eqref{eqn:static-epidemic-threshold}. Then, the overall dynamics converges to $I^*=(0,\ldots,0)$ almost surely under
Markov process strategy $\sigma_t$ with infinitesimal generator $Q=(q_{uv})_{N'\times N'}$ defined as:
\begin{enumerate}
\item[(i)]
for $k> j$,
$-q_{kk}\leq \frac{2a[\beta-\gamma \lambda_1(A_k)-\delta]}{\frac{jc+N'-1-j}{N'-1}-a}$;

\item[(ii)]
for $\ell\leq j$,
$-q_{\ell\ell}\geq \frac{2b[\gamma\lambda_{1}(A_\ell)-\beta+\delta]}{b-\frac{c(j-1)}{N'-1}-\frac{N'-j}{N'-1}}$;
\item[(iii)] $q_{rp}=\frac{-q_{rr}}{N'-1}$ for all $p\ne r$ and $p,r\in\{1,\ldots,N'\}$.
\end{enumerate}
\end{theorem}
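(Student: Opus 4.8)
The plan is to bound the nonlinear switched dynamics from above by a linear comparison system with Markovian switching, prove that this comparison system decays to the origin almost surely by an ergodic time‑average argument, and finally check that hypotheses (i)--(iii) on the generator $Q$ force the controlling time‑average to be positive. This is the same philosophy as Theorem~\ref{sd_theorem} and Corollary~\ref{corollary:single-configuration}, except that the switching now acts on the structure rather than on the parameters, so the analysis is closer to standard switched‑systems / Markov‑jump‑linear‑systems techniques (cf.~\cite{Liberzon,Mariton}). Concretely, write $I(t)=(i_1(t),\ldots,i_n(t))$. Using the Weierstrass product bound $\prod_u(1-A_{vu}(t)i_u(t)\gamma)\ge 1-\gamma\sum_u A_{vu}(t)i_u(t)$ together with $1-i_v(t)\le 1$, the master equation yields the componentwise differential inequality $\dot I(t)\le M(\sigma_t)I(t)$, where $M_l=\gamma A_l-\beta I_n$ (recall $\beta,\gamma$ are constant in this section). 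Each $M_l$ is Metzler (quasi‑monotone), so the comparison lemma gives $0\le I(t)\le x(t)$ componentwise, where $x(\cdot)$ solves the linear switched system $\dot x=M(\sigma_t)x$ with $x(0)=I(0)\ge 0$. Hence it suffices to show $x(t)\to 0$ almost surely.

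Next I would obtain a per‑mode decay estimate for a scalar functional $V$ along $x$. For each mode $l$, $\lambda_1(A_l)$ is the Perron root of the nonnegative matrix $A_l$; projecting $x$ onto the associated positive (left) Perron eigenvector $v_l$ of $A_l$ --- or, when the $A_l$ are symmetric, simply using the common functional $\|x\|^2$ --- one gets $\frac{d}{dt}\log V(x(t))\le -2(\mu_k-\delta)$ while $\sigma_t=k>j$ is a ``good'' mode ($\mu_k>0$) and $\frac{d}{dt}\log V(x(t))\le 2(\delta-\mu_\ell)$ while $\sigma_t=\ell\le j$ is a ``bad'' mode. The slack $\delta$ and the constants $a,b,c$ enter precisely here: in the general case one assigns each mode a quadratic Lyapunov function with effective decay/growth rate $\mu_k-\delta$ resp. $|\mu_\ell|+\delta$, and $a,b$ bound these functions below/above while $c$ bounds the multiplicative factor incurred when $V$ is switched from one mode's function to another's. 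Since $x(\cdot)$ is continuous, integrating yields $\log V(x(T))\le \log V(x(0))-2\int_0^T g(\sigma_t)\,dt$ for the piecewise‑constant rate $g$ with $g(k)=\mu_k-\delta$ for $k>j$ and $g(\ell)=-(\delta-\mu_\ell)$ for $\ell\le j$.

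Then I would invoke ergodicity. By (iii) every off‑diagonal rate equals $-q_{rr}/(N'-1)>0$, so $\sigma_t$ is an irreducible finite‑state continuous‑time Markov chain; solving $\pi Q=0$ with the structure (iii) forces $-\pi_r q_{rr}$ to be constant in $r$, i.e. $\pi_r$ proportional to $1/(-q_{rr})$ (the mean sojourn times). The ergodic theorem then gives $\frac1T\int_0^T g(\sigma_t)\,dt\to\sum_r\pi_r g(r)$ almost surely, so $x(T)\to 0$, hence $I(T)\to 0$, almost surely provided $\sum_{k>j}\frac{\mu_k-\delta}{-q_{kk}}\ \ge\ \sum_{\ell\le j}\frac{\delta-\mu_\ell}{-q_{\ell\ell}}$ (which is itself sufficient for $\sum_r\pi_r\mu_r>0$). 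This is where (i) and (ii) are used: (i) makes $\frac{\mu_k-\delta}{-q_{kk}}$ exceed a constant independent of $k$, and (ii) makes $\frac{\delta-\mu_\ell}{-q_{\ell\ell}}$ stay below a constant independent of $\ell$; summing over the $N'-j$ good and the $j$ bad modes, the denominator weights $\frac{jc+N'-1-j}{N'-1}$ and $\frac{c(j-1)+N'-j}{N'-1}$ appearing in (i)--(ii) are exactly what makes the required inequality collapse to a relation among $a,b,c$ that holds under the standing assumptions on these constants.

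The step I expect to be the main obstacle is the second one: carrying out the per‑mode Lyapunov estimate with exactly the constants that make the bookkeeping in (i)--(iii) match --- in particular, correctly bounding the multiplicative jump of $V$ across a switch (which is what pins down $a,b,c$) and ensuring that it is the \emph{almost‑sure} ergodic time‑average, not merely an expectation or mean‑square bound, that governs $\|x(t)\|$. Once the rate function $g$ and its constants are in place, everything downstream is routine: ergodic averaging, the formula $\pi_r\propto 1/(-q_{rr})$ from (iii), and the algebra closing the inequality from (i)--(ii).
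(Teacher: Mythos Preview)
Your comparison to a linear Markov-jump system $\dot x = M(\sigma_t)x$ with $M_l=\gamma A_l-\beta I_n$ is exactly what the paper does, but from there the arguments diverge. The paper does \emph{not} use a pathwise ergodic time-average. Instead it constructs mode-dependent quadratic Lyapunov matrices $P_m$ satisfying $aI_n<P_k<I_n$ for the good modes $k>j$ and $bI_n<P_\ell<cI_n$ for the bad modes $\ell\le j$, sets $V(y,\sigma_t)=\tfrac12\, y^\top P(\sigma_t)y$, and applies Dynkin's formula to the joint strong Markov process $(y,\sigma_t)$. The key step is that hypotheses (i)--(iii) are precisely what force the generator-type inequality
\[
\{P_m M_m\}^s+\tfrac12\sum_{r=1}^{N'} q_{rm}P_r\ \le\ -\tfrac{\delta}{2}\,I_n\qquad\text{for every }m,
\]
after which Dynkin yields $\mathbb{E}\big[e^{\zeta t}V(y(t),\sigma_t)\big]\le V(y_0,\sigma_0)$, hence mean-square exponential decay of $y$, hence almost-sure convergence of $i$. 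In particular $a,b,c$ are not multiplicative switching-jump factors as you conjecture: they are the two-sided spectral bounds on the matrices $P_m$, and the peculiar fractions $\frac{jc+N'-1-j}{N'-1}$ and $\frac{c(j-1)+N'-j}{N'-1}$ in (i)--(ii) arise from estimating $\sum_{r\ne m}q_{rm}P_r$ via $q_{rm}=-q_{rr}/(N'-1)$ together with those bounds.

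Your ergodic route is a legitimate alternative for Markov jump linear systems, and when the $A_l$ are symmetric it would in fact deliver the cleaner sufficient condition $\sum_r\pi_r\mu_r>0$ with $\pi_r\propto 1/(-q_{rr})$, not requiring $a,b,c$ at all. The gap in your proposal is the closing step: you assert that (i)--(ii) make your time-average inequality ``collapse to a relation among $a,b,c$ that holds under the standing assumptions,'' but you never verify this algebra, and the denominators in (i)--(ii) were engineered for the LMI displayed above, not for your condition $\sum_{k>j}\frac{\mu_k-\delta}{-q_{kk}}\ge\sum_{\ell\le j}\frac{\delta-\mu_\ell}{-q_{\ell\ell}}$. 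If you pursue your route you must either carry out that verification explicitly, or---in the non-symmetric case where a common $\|x\|^2$ does not give the sharp rates $\mu_l$---control the number of switches $N(T)$ (which is a.s.\ of order $T$) when passing between mode-dependent functionals, which effectively pushes you back toward the paper's Lyapunov-matrix bookkeeping.
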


\begin{proof}
Suppose (if needed, with reordering)
\begin{align}\label{dsvsss-theorem-order}
\lambda_1(A_1)\geq \cdots\lambda_1(A_j)>\frac{\beta}{\gamma}>\lambda_1(A_{j+1})\cdots\geq \lambda_{1}(A_{N'}).
\end{align}
For any $k> j$, $[\gamma A_k-\beta I_n]$ is a Hurwitz matrix \cite{Teschl} (i.e., real parts of all eigenvalues are negative), meaning that
there exist positive definite matrices $P_k<I_n$ and a constant $0<\delta\ll 1$ such 
that $(P_k[\gamma A_k-\beta I_n])^s=[\gamma \lambda_1(A_k)-\beta +\frac{\delta}{2}] P_k<0$. 
We can find positive definite matrices $P_\ell$ with $\ell\le j$ and positive constants $a<1<b<c$ such that
\begin{align*}
aI_n<P_k<I_n<b I_n<P_\ell<cI_n, ~\forall k> j,~\ell\leq j,
\end{align*}
and
\begin{align*}
\{P_\ell[\gamma A_\ell-\beta I_n]\}^s\leq [\gamma \lambda_1(A_\ell)-\beta+\frac{\delta}{2}]P_\ell,~\forall~\ell\leq j.
\end{align*}
By combining
(i)-(iii) in the condition of the theorem, we obtain
\ignore{
\begin{align}
\nonumber q_{mm}\bar{P}
+2\lambda_{\max}(\{P_m[\gamma A_m-\beta I_n]\}^s)
+2\delta<0, \forall m,
\end{align}
hence, for any $x\in\mathbb R^n$,
\begin{align*}
&x^{\top}\bigg\{\{P_m[\gamma A_m-\beta I_n]\}^s+\frac{1}{2}\sum_{r=1}^{N'} q_{rm} P_r+\delta I_n\bigg\}x\\
\leq &\frac{1}{2}\bigg\{ q_{mm}\bar{P}
+2\lambda_{\max}(\{P_m[\gamma A_m-\beta I_n]\}^s)
+2\delta\bigg\}x^{\top}x\\
\leq& 0,
\end{align*}
which implies }
\begin{align}\label{dsvsss-slow-con}
\{P_m[\gamma A_m-\beta I_n]\}^s+\frac{1}{2}\sum_{r=1}^{N'} q_{rm} P_r\leq-\frac{\delta}{2} I_n.
\end{align}
Since the parameters are static and the structures are driven by Markov process $\sigma_t$, the dynamics of $i_v(t)$ for $v\in V$ is:
\begin{align}\label{sys1}
\frac{d i_v(t)}{dt}=\bigg[1-\prod_{u\in V}[1-\gamma (A_{\sigma_t})_{vu} i_u(t)]\bigg](1-i_v(t))-\beta i_v(t).
\end{align}
Since
$$\bigg[1-\prod\limits_{u\in V}[1-\gamma (A_{\sigma_t})_{vu}i_u(t)]\bigg](1-i_v(t))\leq\sum\limits_{u\in V}\gamma (A_{\sigma_t})_{vu}i_u(t)$$
always holds, we define a new variable $y_v(t)$ with dynamics:
\begin{align}\label{compare}
\frac{d y_v(t)}{dt}=\sum_{u\in V}\gamma (A_{\sigma_t})_{vu}y_u(t)-\beta y_v(t).
\end{align}
Note that any sample point $w\in\Omega$ corresponds to a deterministic $\sigma_t(w)$.
Let
\begin{eqnarray*}
i(t)=[i_1(t),\cdots,i_n(t)]^{\top}, y(t)=[y_1(t),\cdots,y_n(t)]^{\top}
\end{eqnarray*}
be the solutions of systems (\ref{sys1}) and (\ref{compare}) under the Markov switching process $\sigma_t(w)$ respectively.
From the comparison theory of differential equations, we know $i(t)\leq y(t)$ holds if $i(0)=y(0)$, which implies that $\mathbb E [\|i(t)\|^2]\leq \mathbb E[\|y(t)\|^2]$.

Let $V(y(t),t,\sigma_t)=\frac{1}{2} y(t)^{\top}P(\sigma_t)y(t)$ and $\zeta=\frac{\delta}{2c}$.
The joint process $\{(y(t),\sigma_{t}):t>0\}$ is a strong Markov process
and the infinitesimal generator of the process is:
\begin{eqnarray*}
\mathcal{L}=Q+diag\{y^{\top}P^{\top}(1)\frac{\partial}{\partial y},
\cdots,y^{\top}P^{\top}(N')\frac{\partial}{\partial y}\}
\end{eqnarray*}
Then, we have
\begin{align*}
\mathcal{L}V(y,t,j)&=&\sum_{k=1}^{N}q_{kj}V(y,t,k)+(\frac{\partial
V(y,t,j)}{\partial y})^{\top}\dot{y},
\end{align*}
From the Dynkin Formula \cite{Dynkin} and Eq. (\ref{dsvsss-slow-con}), we have
\begin{eqnarray*}
&&\mathbb{E}e^{\zeta t}V(y(t),t,\sigma_t) \\
&=& V(y_0,0,\sigma_0)+\mathbb{E} \int_{0}^t \zeta e^{\zeta\tau}V(y(\tau),\tau,\sigma_\tau)d\tau\\ \nonumber
&&+\mathbb{E}\int_{0}^t e^{\zeta\tau}\mathcal LV(y(\tau),\tau,\sigma_\tau)d\tau\\\nonumber
&=& V(y_0,0,\sigma_0)+\mathbb{E} \int_{0}^t \zeta e^{\zeta\tau}y(\tau)^{\top}P(\sigma_\tau)y(\tau)d\tau\\\nonumber
&&+\mathbb E\int_{0}^t e^{\zeta\tau} y(\tau)^{\top}\{P(\sigma_\tau)[\gamma A(\sigma_\tau)-\beta I_n]\}^sy(\tau)d\tau\\\nonumber
&&+\frac{1}{2}\mathbb E\int_0^t e^{\zeta\tau}y(\tau)^{\top}\sum_{r=1}^{N'}q_{r,\sigma_\tau}P(r) y(\tau)d\tau\\\nonumber
&\leq&  V(y_0,0,\sigma_0).
\end{eqnarray*}
Hence, we have
\begin{eqnarray*}
\mathbb {E}\bigg[\|i(t)\|^2\bigg] &\leq& \mathbb{E}\bigg[\|y(t)\|^2\bigg]\leq \frac{2\mathbb{E} V(y(t),t,\sigma_t)}{a}\\
&\leq& \frac{2 V(y_0,0,\sigma_0)}{a}e^{-\zeta t},
\end{eqnarray*}
which implies that $\|i(t)\|$ converges to zero almost surely for all $v$. This completes the proof.
\end{proof}

\subsection{Characterizing Power of MTD without Considering Cost}

\begin{theorem}
\label{theorem:dynamic-structure-only}
Suppose configuration $\C_1=(G_1,\beta,\gamma)$ violates condition \eqref{eqn:static-epidemic-threshold} and MTD-induced configurations
$\C_l=(G_l,\beta,\gamma)$ for $l=2,\ldots,N'$ satisfy condition \eqref{eqn:static-epidemic-threshold}.
Denote by $\mu_l=\beta-\gamma \lambda_1(A_j)$ for $l=1,\ldots,N'$.
Without loss of generality, suppose $\mu_1<0<\mu_2<\cdots<\mu_{N'}$.
Under the definition of $Q$ in Theorem \ref{dsvsss-theorem},
the maximum portion of time the system can afford to stay in configuration $\C_1$ is
\begin{align}\label{dsvsss-pi-1}
\pi_1^*=\frac{\frac{b-1}{2b[-\mu_1+\delta]}}{\frac{b-1}{2b[-\mu_1+\delta]}+\frac{c-a}{2a[\mu_{N'}-\delta]}},
\end{align}
which is reached by launching MTD to induce configuration $(G_{N'}, \beta,\gamma)$.
That is, MTD is $(\mu_1,\cdots,\mu_{N'},\pi_1^*)$-powerful.
\end{theorem}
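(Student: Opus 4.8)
The plan is to exploit the explicit stationary distribution of the symmetric generator $Q$ of Theorem~\ref{dsvsss-theorem}, specialized to $j=1$ (only $\C_1$ violates the threshold, while every $\C_l$ with $l\ge2$ satisfies it). First I would solve the balance equations $\pi Q=0$ for a generator with $q_{rp}=-q_{rr}/(N'-1)$ for all $p\ne r$: writing $d_r=-q_{rr}>0$, the $p$-th balance equation reads $d_p\pi_p=\frac{1}{N'-1}\sum_{r\ne p}d_r\pi_r$, which forces $\pi_r\propto 1/d_r$, i.e.
\[
\pi_1=\frac{1/(-q_{11})}{\sum_{r=1}^{N'}1/(-q_{rr})}.
\]
Thus maximizing the fraction of time in $\C_1$ reduces to making $1/(-q_{11})$ as large as possible and $\sum_{r\ge2}1/(-q_{rr})$ as small as possible, subject only to feasibility conditions (i)--(ii) of Theorem~\ref{dsvsss-theorem}.

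Next I would turn those conditions into one-line bounds. With $j=1$, condition (ii) for $\ell=1$ gives $-q_{11}\ge\frac{2b[-\mu_1+\delta]}{b-1}$, hence $1/(-q_{11})\le\frac{b-1}{2b[-\mu_1+\delta]}$, a bound independent of how many configurations are used. Condition (i) gives, for each $k\ge2$, $-q_{kk}\le\frac{2a[\mu_k-\delta]}{\frac{c+N'-2}{N'-1}-a}$; since $\mu_k\le\mu_{N'}$ and there are $N'-1$ such terms,
\begin{align*}
\sum_{k=2}^{N'}\frac{1}{-q_{kk}}
&\ge (N'-1)\cdot\frac{\frac{c+N'-2}{N'-1}-a}{2a[\mu_{N'}-\delta]}
= \frac{c+N'-2-a(N'-1)}{2a[\mu_{N'}-\delta]} \\
&\ge \frac{c-a}{2a[\mu_{N'}-\delta]},
\end{align*}
where the last step uses $c+N'-2-a(N'-1)-(c-a)=(N'-2)(1-a)\ge0$ because $a<1$ and $N'\ge2$ (the identical estimate holds verbatim, with $M$ in place of $N'$, for a strategy whose support has only $M<N'$ configurations). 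Combining, $\sum_{r=1}^{N'}1/(-q_{rr})\ge 1/(-q_{11})+\frac{c-a}{2a[\mu_{N'}-\delta]}$, and since $x\mapsto x/(x+C)$ is increasing for $C>0$,
\begin{align*}
\pi_1 &\le \frac{1/(-q_{11})}{1/(-q_{11})+\frac{c-a}{2a[\mu_{N'}-\delta]}} \\
&\le \frac{\frac{b-1}{2b[-\mu_1+\delta]}}{\frac{b-1}{2b[-\mu_1+\delta]}+\frac{c-a}{2a[\mu_{N'}-\delta]}}=\pi_1^*.
\end{align*}

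Finally I would exhibit a strategy attaining the bound: the two-state chain over $\{\C_1,\C_{N'}\}$ (so $N'=2$ inside Theorem~\ref{dsvsss-theorem}, using the matrices $P_1,P_{N'}$ and the very same constants $a<1<b<c$), with $-q_{11}=\frac{2b[-\mu_1+\delta]}{b-1}$ and $-q_{N'N'}=\frac{2a[\mu_{N'}-\delta]}{c-a}$, i.e.\ equality in both constraints. Then $\pi_1=\pi_1^*$, $\pi_{N'}^*=1-\pi_1^*$, all other configurations getting weight $0$; and since these rates satisfy conditions (i)--(iii) of Theorem~\ref{dsvsss-theorem}, that theorem certifies the overall dynamics converges to $I^*=(0,\ldots,0)$ almost surely, so the strategy is admissible. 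Hence MTD is $(\mu_1,\dots,\mu_{N'},\pi_1^*)$-powerful. I expect the main obstacle to be the denominator estimate in the middle step: forcing the chain of inequalities to collapse to the clean constant $c-a$ \emph{uniformly} over all admissible support sizes (so that no strategy using three or more configurations can beat the two-state one), which is exactly what singles out $\C_{N'}$, the configuration with the largest $\mu$, as the only one worth deploying.
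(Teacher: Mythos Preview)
Your proof is correct and follows essentially the same route as the paper's: both identify $\pi_r\propto 1/(-q_{rr})$, apply the bounds of Theorem~\ref{dsvsss-theorem} with $j=1$ (and variable support size) to squeeze $\pi_1$ down to the displayed ratio via the key step $c+m-1-am\ge c-a$ (equivalently your $(N'-2)(1-a)\ge 0$), and then exhibit the two-state chain on $\{\C_1,\C_{N'}\}$ as the optimizer. Your parenthetical remark about replacing $N'$ by the support size $M$ is exactly how the paper handles strategies that use only a subset of the available configurations.
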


\begin{proof}
The infinitesimal generator $Q$ defined in the proof of
Theorem \ref{dsvsss-theorem} specifies the desired law $\sigma_t$, which can guide the deployment of MTD to force the overall dynamics converge to $I^*=(0,\ldots,0)$.
Note that $j$ in Eq. (\ref{dsvsss-theorem-order}) represents the number of configurations that violate condition \eqref{eqn:static-epidemic-threshold}.
Hence, $j=1$ in the present scenario.
For each $r\in\{1,\ldots,N'\}$, let $x_r=\frac{1}{-q_{rr}}$; then $\pi_r=\frac{x_r}{\sum_{p} x_p}$ is the portion of time in configuration $\C_r$.
Let $a,b,c$ be as defined in the proof of Theorem \ref{dsvsss-theorem}.

Consider configurations $\{\C_{1},\C_{k_{1}},\cdots,\C_{k_{m}}\}$, where $m\le N'$, and $\C_{k_l}=(G_{k_l},\beta,\gamma)$ for $l\in \{1,\cdots,m\}$
and $\{k_1,\ldots,k_m\}\subseteq\{2,\cdots,N'\}$ (which will be determined below) are MTD-induced configurations.
Under the definition of $Q$ in Theorem \ref{dsvsss-theorem}, we have
\begin{align*}
x_1\leq\frac{b-1}{2b[-\mu_1+\delta]}, ~~x_{k_l}\geq\frac{\frac{c+m-1}{m}-a}{2a[\mu_{k_l}-\delta]},~l=1,\cdots,m.
\end{align*}
This means that the dynamics converges to $I^*=(0,\ldots,0)$,
while staying in configuration $\mathcal C_1$ for a portion of time $\pi_1$, where
\begin{align*}
\pi_1&=\frac{x_1}{x_1+x_{k_1}+\cdots+x_{k_m}}\leq
\frac{x_1}{x_1+\sum_{l}\frac{\frac{c+m-1}{m}-a}{2a[\mu_{k_l}-\delta]}}\\
&\leq \frac{x_1}{x_1+\frac{c+m-1-am}{2a[\max_{l}\mu_{k_l}-\delta]}}\leq
\frac{x_1}{x_1+\frac{c-a}{2a[\mu_{k_{N'}}-\delta]}}\\
&\leq\frac{\frac{b-1}{2b[-\mu_1+\delta]}}{\frac{b-1}{2b[-\mu_1+\delta]}+\frac{c-a}{2a[\mu_{N'}-\delta]}}.
\end{align*}
We see that the maximum $\pi_1$, namely $\pi_1^*$, is reached when \\
$\{G_{k_1},\ldots,G_{k_m}\}=\{G_{N'}\}$. This completes the proof.
\end{proof}

Theorem \ref{theorem:dynamic-structure-only} further confirms that $\mu$ is indicative of the capability of a configuration in terms of
``forcing'' the overall dynamics to converge to $I^*=(0,\ldots,0)$.
Eq. (\ref{dsvsss-pi-1}) says that $\pi_1^*$ is monotonically increasing in $\mu_{N'}$ for fixed $\mu_1$ and decreasing in $\mu_1$ for fixed $\mu_{N'}$.
Figure \ref{fig:pi-1-dependence-2} confirms this property with
$a=0.8$, $b=1.5$, $c=2.4$, $\delta=10^{-5}$, while Eq. (\ref{dsvsss-pi-1}) leads to $\pi_1^*=\frac{\mu_{N'}-\delta}{\mu_{N'}-6\mu_1+5\delta}$.
\begin{figure}[H]
\centering
\includegraphics[width=0.46\textwidth]{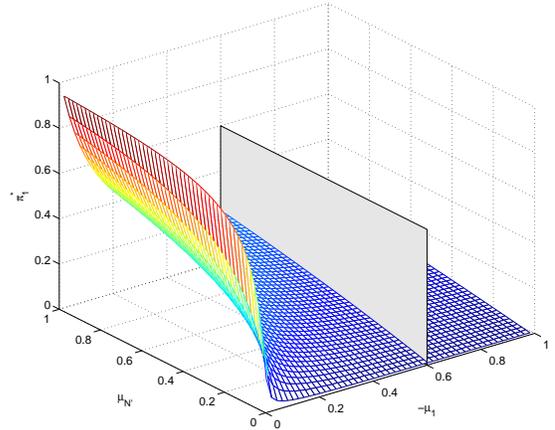}
\caption{Dependence of $\pi_1^*$ on $-\mu_1$ and $\mu_{N'}$.
}
\label{fig:pi-1-dependence-2}
\end{figure}

{\bf Remark}. In Theorem \ref{theorem:dynamic-structure-only} we consider a single configuration $\C_1$ that violates condition \eqref{eqn:static-epidemic-threshold}.
We can extend Theorem \ref{theorem:dynamic-structure-only} to accommodate multiple configurations that violate condition \eqref{eqn:static-epidemic-threshold},
because Theorem \ref{dsvsss-theorem} can accommodate this scenario.
However, if the goal is to maximize the time that the system can stay in the configurations that violate condition \eqref{eqn:static-epidemic-threshold},
the optimal solution is with respect to configuration $\C_j$, where $j$ is given by Eq. \eqref{dsvsss-theorem-order},
such that the system will stay in $\C_j$ for the maximum portion of time $\pi_1^*$ given by
Theorem \ref{theorem:dynamic-structure-only}.

\subsection{Characterizing Power of MTD while Considering Cost}
\label{ds_cost}

The goal is to make the dynamics converge to $I^*=(0,\ldots,0)$,
at minimum cost of launching MTD,
while the system stays a predetermined $\pi_1$ portion of time in $\C_1$.
The preceding case proved that $\pi_1 \leq \frac{\frac{b-1}{2b[-\mu_1+\delta]}}{\frac{b-1}{2b[-\mu_1+\delta]}+\frac{c-a}{2a[\mu_{N'}-\delta]}}$
is necessary to make the dynamics converge to $I^*=(0,\ldots,0)$.
Let $\mu_l=\beta-\gamma\lambda_1(A_l)$ for $l=1,\ldots,N'$.
Let the cost function $g(\cdot)$: ${\mathbb R}^+\to\mathbb R^+$ be the same as discussed in Definition \ref{definition-measure-with-cost},
namely that $g(\mu_l)$ is the cost of launching MTD to induce $\C_l$ for $l=2,\ldots,N'$,
where $g'(\mu)\geq 0$ for $\mu>0$.

Let $\sigma_t$ be the desired ``law" for deploying MTD, and denote by $Q=[q_{jk}]$ its infinitesimal generator.
Let $x_l=\frac{1}{-q_{ll}}$, where $\frac{1}{-q_{ll}}$ is the expectation of sojourn time in configuration $\C_l$.
Then, the portion of time in $\C_l$ is $\pi_{l}=\frac{x_{l}}{\sum_{l=1}^{N'}x_{l}}$. Our goal is to find the ``law" under which the cost is minimum.
Theorem \ref{dsvsss-theorem} specifies the desired Markov ``law" $\sigma_t$ via its infinitesimal generator.
Now we consider cost function $\Psi(x)$ with respect to $x=[x_{1},\cdots,x_{N'}]$.

Let $G_{k_l}$ for $l=1,\ldots,m'$ and $m'\geq 1$ be the MTD-induced configurations,
namely $\pi_k=0$ for $k\notin\{ 1,k_1,\cdots,k_{m'}\}$ and $\{k_1,\cdots,k_{m'}\}\subseteq\{2,\cdots,N'\}$.
Under the the definition of $Q$ in Theorem \ref{dsvsss-theorem}, $\sigma_t$ needs to satisfy:
\begin{align}
\label{dsvsss-opt-condition}
\left\{\begin{array}{lr}
0<x_1\leq \frac{b-1}{2b[-\mu_1+\delta]}\triangleq\bar{x}_1,& \\
x_{k_l}\geq\frac{\frac{c+m'-1}{m'}-a}{2a[\mu_{k_l}-\delta]}\triangleq\bar{x}_{k_l}(m'), ~~~l=1,\cdots, m'&\\
~~~~~~0<\delta\ll 1, a<1<b<c.&
\end{array}\right.
\end{align}
Note that $\frac{x_{1}}{x_1+\sum_{l=1}^{m'}x_{k_l}}$ and
$\frac{x_{k_l}}{x_1+\sum_{l=1}^{m'}x_{k_l}}$ are respectively the portions of time in configurations $\C_1$ and $\C_{k_l}$.
Define
\begin{eqnarray}
\bar{\pi}_{1}\triangleq\frac{\bar{x}_1}{\bar{x}_1+\sum_{l=1}^{m'}\bar{x}_{k_l}(m')},\label{ss1}
\end{eqnarray}
the maximum portion of time the system can stay in $\C_1$ when MTD induces the $m'$ configurations.
For any $\pi_1$ such that $\pi_{1}>\bar{\pi}_{1}$ does not hold,  $\pi_1$ cannot be realised by a underlying Markov process.
Therefore, we assume that $\pi_{1}^{*}<\bar{\pi}_{1}$.

Denote the index set corresponding to Eq. (\ref{ss1}) as 
\begin{align*}
\mathcal K=\bigg\{\{k_1,\cdots,k_{m'}\}|\pi_1^*\leq\frac{\bar{x}_1}{\bar{x}_1+\sum_{l=1}^{m'}
\bar{x}_{k_l}(m')}, k_1<\cdots< k_{m'}\bigg\}.
\end{align*}
For $\{k_1,\cdots,k_{m'}\}\in\mathcal K$, we need to find the ``law" $\sigma_t$
that satisfies (\ref{dsvsss-opt-condition}).
From $\pi_{1} =\frac{x_{1}}{x_1+\sum_{l=1}^{m'}x_{k_l}}$, we have
$\sum_{l=1}^{m'}x_{k_l}=\frac{1-\pi_1 }{\pi_1 }x_1$.
The cost of launching MTD according to ``law" $\sigma_t$ is:
\begin{eqnarray}
\label{dsvsss-opt-cost}
&&\Psi(x_1,x_{k_1},\cdots,x_{k_{m'}}) \nonumber\\
& =& \pi_1 g(\mu_1)+\sum_{l=1}^{m'}\pi_{k_l} g(\mu_{k_l}) \nonumber\\
&=&\pi_1 g(\mu_1)+(1-\pi_1^*)\frac{\sum_{l=1}^{{m'}}x_{k_l}g(\mu_{k_l})} {\sum_{l=1}^{{m'}}x_{k_l}}
\end{eqnarray}
subject to
\begin{align}
\label{dsvsss-opt-condition1}
\left\{\begin{array}{lr}
x_{k_l}\geq\frac{\frac{c+{m'}-1}{\ell}-a}{2a[\mu_{k_l}-\delta]}, ~~~l=1,\cdots,{m'}&\\
\sum_{l=1}^{{m'}}x_{k_l}=\frac{1-\pi_1 }{\pi_1 }x_1\leq\frac{1-\pi_1 }{\pi_1 }\bar{x}_1,&\\
~~~~~~~~~~~~0<\delta\ll 1, a<1<b<c.&
\end{array}\right.
\end{align}
We want to compute the minimize cost
\begin{align*}
\min_{x,\{k_1,\cdots,k_{m'}\}\in\mathcal K}\Psi(x_1,x_{k_1},\cdots,x_{k_{m'}}).
\end{align*}

\begin{theorem}
\label{dsvsss-theorem-B}
Given configuration $\C_1$ that violates condition \eqref{eqn:static-epidemic-threshold}
and MTD-induced configurations $\C_l$ for $l=2,\ldots,N'$ that satisfy condition \eqref{eqn:static-epidemic-threshold}.
Suppose $\pi_1 $, where $0<\pi_1 \leq \frac{\frac{b-1}{2b[-\mu_1+\delta]}}{\frac{b-1}{2b[-\mu_1+\delta]}
+\frac{c-a}{2a[\mu_{N'}-\delta]}},
$ is the portion of time that the system must stay in $\C_1$.
Denote by
\begin{eqnarray*}
&&G(k_1,\cdots,k_{m'})\\
&=&\frac{\sum_{l=1}^{\ell}\bar{x}_{k_l}(m')g(\mu_{k_l})+ g(\mu_{k_1})\Delta(k_1,\cdots,k_{m'})}{\sum_{l=1}^{m'}\bar{x}_{k_l}(m')+\Delta(k_1,\cdots,k_{m'})},\\
\end{eqnarray*}
where
\begin{eqnarray*}
\bar{x}_{k_{l}}(m')&=&\frac{\frac{c+m'-1}{\ell}-a}{2a[\mu_{k_l}-\delta]},\\
\Delta(k_1,\cdots,k_{m'})&=&\frac{1-\pi_1 }{\pi_1 }\bar{x}_1-\sum_{l=1}^{m'}\bar{x}_{k_l}(m').
\end{eqnarray*}
We want to find $\{k_1^*,\cdots,k_m^*\}$ such that
\begin{align}\label{dsvsss-withcost-opt}
\{\mu_{k_1^*},\cdots,\mu_{k_m^*}\}
=\arg\min_{\{k_1,\cdots,k_{m'}\}\in\mathcal K}G(k_1,\cdots,k_{m'})
\end{align}
For given cost function $g(\cdot)$ with arbitrary constant $g(\mu_1)$, the minimum cost is
\begin{eqnarray*}
&&\min_{x,\{k_1,\cdots,k_{m'}\}\in\mathcal K}\Psi(x_1,x_{k_1},\cdots,x_{k_{m'}}) \nonumber\\
&=&\Psi(\bar{x}_1,\bar{x}_{k_1^*}(m)+\Delta,\cdots,\bar{x}_{k_m^*}(m))\\
&=&\pi_1g(\mu_1)+(1-\pi_1 )G(k_1^*,\cdots,k_m^*),\nonumber
\end{eqnarray*}
which is reached by launching MTD to induce configuration \\
$\{(G_{k_l^*},\beta,\gamma)\}_{l=1}^{m}$ via the following deployment strategy:
\begin{align}\label{dsvsss-withcost-deployment}
&\pi_{k_1^*}=(1-\pi_1 )\frac{\bar{x}_{k_1^*}(m)+\Delta(k_1^*,\cdots,k_m^*)}
{\sum_{l=1}^{m}\bar{x}_{k_l^*}(m)+\Delta(k_1^*,\cdots,k_m^*)},\\
\nonumber&\pi_{k_l^*}=(1-\pi_1 )\frac{\bar{x}_{k_l^*}(m)}
{\sum_{l=1}^{m}\bar{x}_{k_l^*}(m)+\Delta(k_1^*,\cdots,k_m^*)}, l=2,\cdots,m.
\end{align}
Hence, MTD is $(\mu_1,\ldots,\mu_{N'},\pi_1 ,\Psi)$-powerful.
\end{theorem}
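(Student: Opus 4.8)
The plan is to split the problem into an inner optimization over the sojourn-time vector $x=[x_1,x_{k_1},\ldots,x_{k_{m'}}]$ for a fixed choice of induced configurations $\{k_1,\ldots,k_{m'}\}$, followed by an outer finite minimization over that choice. The first point to record is that Theorem \ref{dsvsss-theorem}, specialized to $j=1$ and to $m'$ induced configurations (so that the count $N'-1$ appearing in the generator bounds is replaced by $m'$), guarantees that every $x$ satisfying \eqref{dsvsss-opt-condition1} corresponds to an admissible Markov law $\sigma_t$ that already forces convergence to $I^*=(0,\ldots,0)$. Hence the remaining content is purely the cost-minimization of \eqref{dsvsss-opt-cost} over the region \eqref{dsvsss-opt-condition1}, and in particular the convergence claim needs no new argument.

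For the inner problem I would rewrite the cost \eqref{dsvsss-opt-cost} as $\pi_1 g(\mu_1)+(1-\pi_1)\frac{\sum_{l}x_{k_l}g(\mu_{k_l})}{\sum_{l}x_{k_l}}$, a weighted average of the $g(\mu_{k_l})$. Fixing $x_1$ (hence fixing $\sum_l x_{k_l}=\frac{1-\pi_1}{\pi_1}x_1$) and writing $x_{k_l}=\bar x_{k_l}(m')+\epsilon_l$ with $\epsilon_l\geq 0$, $\sum_l\epsilon_l=\Delta(k_1,\ldots,k_{m'})$, the numerator becomes $\sum_l\bar x_{k_l}(m')g(\mu_{k_l})+\sum_l\epsilon_l g(\mu_{k_l})$; since $g$ is non-decreasing and $\mu_{k_1}<\cdots<\mu_{k_{m'}}$, the second term is minimized by loading all of $\Delta$ onto $\epsilon_1$, which produces exactly the ratio $G(k_1,\ldots,k_{m'})$ of the statement. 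A one-line derivative check then shows $G$ is non-increasing in $\Delta$: the numerator of $\partial G/\partial\Delta$ is $g(\mu_{k_1})\sum_l\bar x_{k_l}(m')-\sum_l\bar x_{k_l}(m')g(\mu_{k_l})$, which is $\leq 0$ again by monotonicity of $g$. Therefore $\Delta$ should be taken as large as possible, i.e. $x_1=\bar x_1$, which is admissible precisely because membership $\{k_1,\ldots,k_{m'}\}\in\mathcal K$ is equivalent to $\Delta\geq 0$. This pins down the inner optimum as $\pi_1 g(\mu_1)+(1-\pi_1)G(k_1,\ldots,k_{m'})$ and identifies the optimal $x$.

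It then remains to carry out the outer minimization \eqref{dsvsss-withcost-opt} over $\{k_1,\ldots,k_{m'}\}\in\mathcal K$, which is a well-posed finite problem because $\mathcal K\neq\emptyset$: it contains $\{N'\}$, since the hypothesis $\pi_1\leq\frac{\bar x_1}{\bar x_1+\bar x_{N'}(1)}$ is exactly the bound $\pi_1^*$ of Theorem \ref{theorem:dynamic-structure-only} given by \eqref{dsvsss-pi-1}. Taking a minimizer $\{k_1^*,\ldots,k_m^*\}$, substituting $x_1=\bar x_1$, $x_{k_1^*}=\bar x_{k_1^*}(m)+\Delta$ and $x_{k_l^*}=\bar x_{k_l^*}(m)$ for $l\geq 2$, and using $\pi_{k_l}=(1-\pi_1)x_{k_l}/\sum_j x_{k_j}$, one recovers the claimed minimum cost $\pi_1 g(\mu_1)+(1-\pi_1)G(k_1^*,\ldots,k_m^*)$ and the deployment strategy \eqref{dsvsss-withcost-deployment}, so that MTD is $(\mu_1,\ldots,\mu_{N'},\pi_1,\Psi)$-powerful.

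I expect the main difficulty to be organizational rather than conceptual. One must keep track of the fact that the lower bounds $\bar x_{k_l}(m')$ themselves depend on $m'$, so the outer minimization ranges over subsets of every cardinality $1\leq m'\leq N'-1$ rather than over a fixed number of indices; and one must justify that the two reductions — loading all slack onto the cheapest index, and then saturating $x_1$ — may be performed sequentially without loss of optimality. This is legitimate because the objective depends monotonically and separately on the slack allocation (for fixed $\Delta$) and on $\Delta$ (after the slack is optimally allocated), with no coupling between the two, but it deserves to be stated explicitly rather than left implicit.
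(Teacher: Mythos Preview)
Your proposal is correct and takes a genuinely different route from the paper. The paper converts the inequality constraints to equalities via squared slack variables $\zeta_{k_l}^2=x_{k_l}-\bar x_{k_l}(m')$ and $\zeta^2=\frac{1-\pi_1}{\pi_1}\bar x_1-\sum_l x_{k_l}$, forms a Lagrangian $\Lambda_1(x,\zeta,\alpha)$, lists all stationary points (one with $x_1$ at its lower boundary, and one $X_l$ for each index that could absorb the entire slack), and then asserts --- without writing out the comparison --- that among these the point $X_1$ with all slack loaded on $k_1$ gives the least cost. Your argument replaces this machinery by two clean monotonicity observations: for a fixed total $\sum_l x_{k_l}$, the weighted average $\sum_l x_{k_l}g(\mu_{k_l})/\sum_l x_{k_l}$ is minimized by putting all slack on the smallest $g(\mu_{k_l})$; and the resulting ratio is non-increasing in the total slack, so $x_1$ should be pushed to $\bar x_1$. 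This is more elementary and, importantly, explains \emph{why} that particular corner wins rather than merely checking it against the other stationary points; it also makes explicit the role of the hypothesis $g'\ge 0$, which the Lagrangian calculation obscures. One small presentational point: when you write $\sum_l\epsilon_l=\Delta(k_1,\ldots,k_{m'})$ while $x_1$ is still free, you are overloading notation --- the running slack equals $\Delta$ only once $x_1=\bar x_1$. Give it a separate name (say $D=\frac{1-\pi_1}{\pi_1}x_1-\sum_l\bar x_{k_l}(m')\in[0,\Delta]$) and identify $D=\Delta$ only after the second monotonicity step; this is exactly the decoupling you flag in your final paragraph and writing it this way removes the ambiguity.
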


\begin{proof}
Suppose $\{k_1,\cdots,k_{m'}\}\in\mathcal K$.
We introduce variables $\zeta_{k_l}^2=x_{k_l}-\bar{x}_{k_l}(m')$ for $l=1,\cdots,m'$, $\zeta^2=\frac{1-\pi_1 }{\pi_1 }\bar{x}_1-\sum_{l=1}^{m'}x_{k_l}$
and translate the minimum problem specified by (\ref{dsvsss-opt-cost})-(\ref{dsvsss-opt-condition1}) into the following minimum problem:
\begin{align*}
\Psi(x_1,x_{k_1},\cdots,x_{k_{m'}})=\pi_1 g(\mu_1)+(1-\pi_1 )\frac{\sum_{l=1}^{m'}x_{k_l} g(\mu_{k_l})}{\sum_{l=1}^{m'}x _{k_l}}
\end{align*}
subject to
\begin{align*}
\left\{\begin{array}{lr}
x_{k_l}=\zeta_{k_l}^2+\bar{x}_{k_l}(m'), ~~~l=1,\cdots,m',&\\
\sum_{l=1}^{m'}x_{k_l}=\frac{1-\pi_1 }{\pi_1 }x_1,&\\
\sum_{l=1}^{m'}x_{k_l}+\zeta^2=\frac{1-\pi_1 }{\pi_1 }\bar{x}_1.&
\end{array}\right.
\end{align*}
Let $x=[x_{k_1},\cdots,x_{k_{m'}}]$, $\zeta=[\zeta_{k_1}\cdots,\zeta_{k_{m'}}]$, and \\
$\alpha=[\alpha',\alpha'',\alpha_{k_1}\cdots,\alpha_{k_{m'}}]$.
We study the Lagrange function
\begin{eqnarray*}\label{dsvsss-lagrange}
\Lambda_1(x,\zeta,\alpha) =\Psi(x_1,x_{k_1},\cdots,x_{k_{m'}})+ \\
\sum_{l=1}^{m'}\alpha_{k_l}[x_{k_l}-\zeta_{k_l}^2-\bar{x}_{k_l}(m')]+\alpha'\left(\sum_{l=1}^{m'}x_{k_l}-\frac{1-\pi_1 }{\pi_1 }x_1\right)+\\
\alpha''\left(\sum_{l=1}^{m'}x_{k_l}+\zeta^2-\frac{1-\pi_1 }{\pi_1 }\bar{x}_1\right).
\end{eqnarray*}
Find all stationary points $\{x,\zeta,\alpha\}$ of $\Lambda_1$, with gradient $\nabla\Lambda_1=0$. The $x$ parts of stationary points are
\begin{eqnarray*}
X_1&=&\left[\frac{\pi_1 \sum_{l=1}^{m'}\bar{x}_{k_l}(m')}{1-\pi_1 },\bar{x}_{k_1}(m'),\cdots,\bar{x}_{k_\ell}(m')\right],\\
X_l&=&\left[\bar{x}_1,\bar{x}_{k_1}(m'),\cdots,\bar{x}_{k_{m'}}(m')\right]+e_{l+1}\Delta(k_1,\cdots,k_{m'}),\\
&&~~~~~~~~~~~~~~ l=1,\cdots,m',
\end{eqnarray*}
where $e_l$ is the vector whose $l$-th element equals 1 and any other element equals 0.

By comparing the costs of these stationary points, we find the minimum cost of launching these $\ell$ configurations is
\begin{align*}
\pi_1 g(\mu_1)+(1-\pi_1 )G(k_1,\cdots,k_{m'}).
\end{align*}
This complete the proof.
\end{proof}

\ignore{ 

\begin{figure}[HTBP!]
\centering
\includegraphics[width=0.45\textwidth]{new-ds-cost.eps}
\caption{Minimum cost $\Psi$ in $(\mu_1,\cdots,\mu_N,\pi_1,\Psi)$-powerful MTD.}
\label{fig:dsvsss-cost}
\end{figure}

To illustrate the results, we set:
$(\beta,\gamma)=(0.4,0.0059)$, $\lambda_1(A_1)=118.4$, $\lambda_1(A_2)=50.74$, $\lambda_1(A_3)=16.95$,
$\delta=10^{-5}$, $a=0.8$, $b=1.5$, $c=2.4$, $\mu_1-\delta\approx-0.3$, $\mu_2-\delta\approx0.1$, $\mu_3-\delta\approx0.3$,
$g(x)=100(x+0.1)^2$.  From (\ref{dsvsss-pi-1}), we get $\pi_1 \leq\frac{\frac{b-1}{2b[-\mu_1+\delta]}}
{\frac{b-1}{2b[-\mu_1+\delta]}+\frac{c-a}{2a[\mu_{3}-\delta]}}=\frac{1}{7}$. We set $\pi_1 =\frac{1}{15}$ in this example.
From Eq. (\ref{dsvsss-effective-defense}), we find $\mathcal K=\bigg\{\{3\},\{2,3\}\bigg\}$.
Hence, we only need to find the cost of launching MTD to induce configuration $\C_3$,
the cost of launching MTD to induce configurations $\C_2,\C_3$,
and find the smaller one of them.
From the above analysis, the minimum cost of launching MTD to induce $\C_3$ is
\begin{align*}
\Psi(x_1,x_3)=\pi_1 g(\mu_1)+(1-\pi_1 )g(\mu_3);
\end{align*}
the minimum cost of launching MTD to induce $\C_2,\C_3$ is
\begin{eqnarray*}
&&\Psi(\bar{x}_2(2)+\Delta(2,3),\bar{x}_3(2)) =\pi_1 g(\mu_1) +\\
&&~~~(1-\pi_1 )\frac{\bar{x}_2(2)g(\mu_2)  +\bar{x}_3(2)g(\mu_3)+g(\mu_2)\Delta(2,3)}{\bar{x}_2(2)+\bar{x}_3(3)+\Delta(2,3)}\\
&\leq& \pi_1 g(\mu_1)+(1-\pi_1 )g(\mu_3).
 \end{eqnarray*}
Hence, the minimum cost is $\Psi(\bar{x}_2(2)+\Delta(2,3),\bar{x}_3(2))$ and it is reached by launching MTD to induce configurations $\C_2,\C_3$.
If the defender launches MTD to induce $\C_3$, the cost is a constant 15.2.
If the defender launches MTD to induce $\C_2, \C_3$, Figure \ref{fig:dsvsss-cost}
plots the cost $\Psi(x)$ w.r.t. $x_2,x_3$. The shadow area in the $x_2x_3$-plain is the constrain slope of $x_2,x_3$ according to (\ref{dsvsss-opt-condition1}).
The minimum cost is $\Psi(x)=6.7$ and it is reached at $[\pi_2,\pi_3]=[0.7083,0.2250]$, which confirms with the analysis result.

} 

{\bf Remark}. Similar to Theorem \ref{theorem:dynamic-structure-only},
in Theorem \ref{dsvsss-theorem-B}  we consider a single configuration $\C_1$ that violates condition \eqref{eqn:static-epidemic-threshold}.
We also can extend Theorem \ref{dsvsss-theorem-B} to accommodate multiple configurations that violate condition \eqref{eqn:static-epidemic-threshold},
because Theorem \ref{dsvsss-theorem} can accommodate this scenario.
The extension is straightforward because the portions of time that are allocated to the violating configurations are fixed and not involved in the definition of cost.

\subsection{Algorithm for Launching Optimal MTD}

Theorems \ref{theorem:dynamic-structure-only} and \ref{dsvsss-theorem-B} are constructive and lead to Algorithm \ref{alg:ds-ss-strategy} that can guide the deployment of optimal MTD.

\begin{algorithm}[hbt]
\caption{Launching optimal MTD (dynamic structures)\label{alg:ds-ss-strategy}}
INPUT: configuration $\C_1$, optional cost function $g(\cdot)$,
MTD-induced $\C_l$ for $l=2,\ldots,N'$ and $N'\geq 2$, constant $a>0$ (determining time resolution),
$\delta$ ($0< \delta <<1$), optional $\pi_1 $ \\
OUTPUT: Optimal MTD strategy

\begin{algorithmic}[1]
\IF{cost function is not given (i.e., no need to consider cost)}
\STATE Compute $\pi_1^*$ according to Eq. (\ref{dsvsss-pi-1})
\WHILE{TRUE}
\STATE Wait for time $T_1\leftarrow \EXP(a/\pi_1^*)$  \COMMENT{system in $\C_1$}
\STATE Launch MTD to force the system to stay in configuration $\C_{N'}$ for time $T_{N'}\leftarrow \EXP(a/(1-\pi_1^*))$
\STATE Stop launching MTD \COMMENT{system returns to $\C_1$}
\ENDWHILE
\ELSE 
\STATE Find indices $k_1^*,\cdots,k_m^*$ according to Eq. (\ref{dsvsss-withcost-opt})
\STATE Set $\pi_{k_1^*},\ldots, \pi_{k_m^*}$ as defined in Eq. (\ref{dsvsss-withcost-deployment}) and $\pi_k=0$ for $k\in \{2,\ldots,N'\}-\{k_1^*,\ldots,k_m^*\}$
\STATE Wait for time $T_1\leftarrow \EXP(a/\pi_1)$ \COMMENT{system in $\C_1$}
\STATE $k_j^*\leftarrow_R \{k_1^*,\cdots,k_m^*\}$
\WHILE{TRUE}
\STATE Launch MTD to make the system stay in $\C_{k_j^*}$ for time $T_{k_j^*}\leftarrow \EXP(a/\pi_{k_j^*})$
\STATE Set $\Delta= \{1,k_1^*,\cdots,k_m^*\} -\{k_j^*\} $
\STATE $k_j^* \leftarrow_R \Delta- \{k_j^*\}$ 
\IF{$k_j^*= 1$}
\STATE Stop launching MTD to make the system stay in $\C_1$ for time $T_1\leftarrow \EXP(a/\pi_{1})$
\STATE $k_j^*\leftarrow_R \{k_1^*,\cdots,k_m^*\}$
\ENDIF 
\ENDWHILE
\ENDIF
\end{algorithmic}
\end{algorithm}

In Algorithm \ref{alg:ds-ss-strategy}, lines 2-7 correspond to the case of not considering cost, where each step incurs $O(1)$ computational complexity.
Lines 9-21 correspond to the case of considering cost.
Specifically, line 9 incurs complexity $O(2^{N'})$, which is not infeasible because in practice $N'$ (i.e., the number of MTD-induced configurations)
is often small.
Possible instances of configurations the system will stay include
$\C_1,\C_{k_1^*},\ldots,\C_{k_m^*},\C_1^*,\ldots$ and \\
$\C_1,\C_{k_1^*},\C_{k_2^*},C_{k_1^*},\ldots,\C_{k_m^*},\C_1^*,\ldots$.

\section{Limitations of the Model}
\label{sec:extension-and-discussion}

First, the present study assumes that the attack-defense structures and parameters are given.
It is sufficient for characterizing the power of MTD.
Nevertheless, it is important to study how to obtain such structures and parameters.

Second, the present study does not allow the attacker to choose when to impose configuration $\C_1$.
It is important to give the attack the freedom in choosing when to impose $\C_1$.
This incurs technical difficulties. For example, the portion of time in the violating configuration may not be fixed at $\pi_{1}$, 
which breaks the setting of the optimisation problem. 

Third, it is interesting to extend the model to accommodate heterogeneous $\gamma_{v,u}$ and $\beta_v$.
However, this will make the model difficult to analyze mainly because of accommodating $\beta_v$.

\section{Conclusion}
\label{sec:conclusion}

We have introduced an approach of using cyber epidemic dynamics to
characterize the power of MTD. The approach offers algorithms for optimally deploying MTD,
where ``optimization" means maximizing the portion of time the system can afford to stay in an undesired configuration, 
or minimizing the cost of launching MTD when the system has to stay in an undesired configuration for a predetermined portion of time.
We have discussed the limitations of the present study, which should inspire fruitful future research.

\medskip

\noindent{\bf Acknowledgement}.
Wenlian Lu was supported by a Marie Curie International Incoming Fellowship from
European Commission (no. FP7-PEOPLE-2011-IIF-302421), National Natural Sciences
Foundation of China (no. 61273309), and Program for New Century Excellent Talents in University (no. NCET-13-0139).
Shouhuai Xu was supported in part by ARO Grant \#W911NF-12-1-0286 and AFOSR Grant FA9550-09-1-0165.
Any opinions, findings, and conclusions or recommendations expressed in this material are those of
the author(s) and do not necessarily reflect the views of any of the funding agencies.

\appendix

Now we present the proof of Theorem  \ref{theorem:sd-ss-minimum-cost}.

\begin{proof}
For any $m\geq k^*$, we have $\pi_1 \mu_1+(1-\pi_1 )\mu_m>0$, meaning that the dynamics will converge to $I^*=0$ by launching MTD to
induce configuration $\C_{m}$ with portion of time $1-\pi_{1}^{*}$.
For any $l<k^*$, we have $\pi_1 \mu_1+(1-\pi_1 )\mu_l\leq0$, meaning that condition (\ref{corollary:ave1}) for the dynamics to converge to $I^*=0$ is
not satisfied even if the defender launches MTD to induce configuration $\C_l$ with portion of time $1-\pi_{1}^{*}$.

If $k^*=2$, the dynamics will converge to $I^*=0$ by launching MTD to induce configuration $\C_{l}, l\geq 2$. Since $f(\cdot)$ is non-decreasing, we have
\begin{eqnarray*}
\Phi(\pi_{2},\cdots,\pi_{N})&=&\pi_1 f(\mu_1)+\sum_{l=2}^N\pi_l f(\mu_l)\\
&\geq&\pi_1 f(\mu_1)+(1-\pi_1 )f(\mu_2).
\end{eqnarray*}
It can be seen that the equality above can be guaranteed by taking $\pi_{2}=1-\pi_{1}^{*}$ and $\pi_{j}=0$ for $j>2$.

If $k^*>2$, we use Lagrange multipliers to calculate the minimum cost $\Phi(\pi)$, subject to constraint (\ref{restrict_con3}). Since Lagrange multipliers require equality constraints,
we introduce vector of variables $x=[x_2,\cdots,x_N]$ with $x_j$ satisfying $\pi_j=x^2_j$ and variable $\zeta$
with $\zeta^2=\pi_1 \mu_1+\sum_{j=2}^{N}\mu_j x_j^2-\delta$, such that solving the minimization problem
is equivalent to finding the minimum of the following function
$\Phi_1(x)=\pi_1 f(\mu_1)+\sum_{j=2}^N x_j^2 f(\mu_j)$
subject to
\begin{eqnarray*}
 h_1(x)&=&\sum_{j=2}^{N}\mu_j x_j^2=-\pi_1 \mu_1+\delta+\zeta^2\\
h_2(x)&=&\sum_{j=2}^{N}x_j^2= 1-\pi_1 ,
\end{eqnarray*}
where $\Phi_1$,$h_1$, and $h_2$ have continuous first partial derivatives.
To solve this variant problem, we introduce Lagrange multipliers $\alpha_1$ and $\alpha_2$ and the Lagrange function as follows
\begin{align*}
\Lambda(x,\zeta,\alpha_1,\alpha_2)=&\Phi_1(x)+\alpha_1[h_1(x)+\pi_1 \mu_1-\delta-\zeta^2]\\
&+\alpha_2[h_2(x)+\pi_1 -1].
\end{align*}
Denote by $\Phi_1(x_0)$ the minimum of $\Phi_1(x)$.
There exist $\bar{\alpha}_1$ and $\bar{\alpha}_2$ such that $(x_0,\zeta_0,\bar{\alpha}_1,\bar{\alpha}_2)$ is a stationary point for the Lagrange function $\Lambda(x,\zeta,\alpha_1,\alpha_2)$,
i.e., with gradient $\nabla\Lambda=0$. Then, we are to solve $\nabla\Lambda=0$:
\begin{eqnarray}\label{gradient}
\left\{\begin{array}{lr}
\nabla_{x_2}\Lambda=2[f(\mu_2)+\alpha_1\mu_2+\alpha_2]x_2=0\\
~~~~~~~~~\vdots\\
\nabla_{x_N}\Lambda=2[f(\mu_N)+\alpha_1\mu_N+\alpha_2]x_N=0\\
\nabla_{\zeta}\Lambda=-2\alpha_1\zeta=0\\
\nabla_{\alpha_1}\Lambda=\sum_{j=2}^N\mu_j x_j^2+\pi_1 \mu_1-\delta-\zeta^2=0\\
\nabla_{\alpha_2}\Lambda=\sum_{j=2}^N x_j^2+\pi_1 -1=0,
\end{array}\right.
\end{eqnarray}

To solve (\ref{gradient}), there are three cases:

{\em Case 1: The optimal strategy is that the defender launches MTD to induce only one configuration $\C_m$.}
Then, $m\geq k^*$ must hold and the cost is $\Phi_1(x)=\pi_1 f(\mu_1)+(1-\pi_1 )f(\mu_m)$. Hence, the minimum cost of launching MTD to induce a single configuration is
\begin{equation}\label{sdss-cost1}
\pi_1 f(\mu_1)+(1-\pi_1 )f(\mu_{k^*}),
\end{equation}
which is reached when inducing configuration $\C_{k^*}$.

{\em Case 2: The optimal strategy is that the defender launches MTD to induce
two configurations $\C_l,\C_m$.} The minimum cost will be reached at $l<k^*\leq m$ because the cost function $f(\mu)$
is non-decreasing in $\mu>0$. Firstly, we look for the minimal cost when launching MTD to induce configurations $\C_l,\C_m, l< k^*\leq m$.
This requires to find stationary points of $\Lambda(x,\zeta,\alpha_1,\alpha_2)$ such that $x_l^2>0,x_m^2>0$ and
${x}_k^2=0$ for $k\geq 2, k\neq l,m$.
It can be verified that the following points are stationary points for $\Lambda(x,\zeta,\alpha_1,\alpha_2)$:
\begin{align*}
\left[\begin{array}{c}
x_l^2\\
x_m^2
\end{array}\right]=
\frac{1}{\mu_k-\mu_1}
&\left[\begin{array}{c}
(\mu_{m}-\delta)+\pi_1 (\mu_1-\mu_{m})\\
-(\mu_{l}-\delta)+\pi_1 (\mu_{l}-\mu_1)
\end{array}\right],\\
&~\zeta=0,~ {x}_k^2=0,~k\geq 2,~ k\neq l,m.
\end{align*}
and the corresponding cost is $F(\mu_l,\mu_m)$. Hence, the minimum cost of inducing two configurations is
\begin{align}\label{sdss-cost2}
\min_{l< k^*\leq m}F(\mu_l,\mu_m),
\end{align}
which is reached by launching MTD to induce configurations $\C_{l^*},\C_{m^*}$ according to $\pi_{l^*},\pi_{m^*}$ in Eq. (\ref{eqn:sdvsss-deploy}).

{\em Case 3: The optimal strategy is that the defender launches MTD to induce $m'\geq3$ configurations $\C_{k_j}, 1\leq j\leq m'$.}
To find the minimum cost, we need to find stationary points of $\Lambda(x,\zeta,\alpha_1,\alpha_2)$
such that $x_{k_j}^2>0, 1\leq j\leq m'$ and ${x}_k^2=0$ for $k\geq 2, k\neq k_j$. That is,
\begin{align}\label{multiple_points}
\left\{\begin{array}{lr}
f(\mu_{k_1})+{\alpha}_1\mu_{k_1}+{\alpha}_2=0\\
~~~~~~~~~\vdots\\
f(\mu_{k_{m'}})+{\alpha}_1\mu_{k_{m'}}+{\alpha}_2=0.\\
\end{array}\right.
\end{align}
Thus, the stationary point $x$ should satisfy
\begin{align*}
\left\{\begin{array}{lr}
x_{k_1}^2\mu_{k_1}+\cdots+\bar{x}_{k_{m'}}^2\mu_{k_{m'}}=\delta-\pi_1 \mu_1\\
\bar{x}_{k_1}^2+\cdots+\bar{x}_{k_{m'}}^2=1-\pi_1 \\
\bar{x}_{k_{l}}^2>0,~1\leq l\leq m',~{x}_k^2=0,~k\geq 2, k\neq k_l\\
\end{array}\right.,~ \zeta=0.
\end{align*}
The cost at this stationary points $x$ becomes
\begin{align}\label{sdss-cost3}
 \nonumber\Phi=&x_{k_1}^2f(\mu_{k_1})
+\cdots+x_{k_{m'}}^2f(\mu_{k_{m'}})\\
\nonumber=&-(\delta-\pi_1 \mu_1)\alpha_1-(1-\pi_1 )\alpha_2\\
=&F(\mu_{k_1},\mu_{k_2}).
\end{align}
If (\ref{multiple_points}) does not hold, then there is no stationary point $x$ in the form $x_{k_j}^2>0, 1\leq j\leq m'$ and
${x}_k^2=0$ for $k\geq 2, k\neq k_j$, meaning that there is no minimum cost when inducing these configurations.

By comparing the costs given by Eqs. (\ref{sdss-cost1}), (\ref{sdss-cost2}), and (\ref{sdss-cost3}), we conclude that
the minimum cost is given by Eq. (\ref{sdss-cost2}). This completes the proof.
\end{proof}

\end{document}